\newcommand{\tarc}{\mbox{\large$\frown$}}
\newcommand{\arcc}[1]{\stackrel{\tarc}{#1}}
\newcommand{\barr}[1]{\overline{#1}}
\newcommand{\sinn}[1]{\sin \left({#1}\right)}
\newcommand{\coss}[1]{\cos \left({#1}\right)}
\newcommand{\ignore}[1]{}
\newcommand{\alg}[1]{$\mathcal{A}_{#1}$}
\newcommand{\ro}[1]{$R_{#1}$}
\def\dist{\alpha}
\def \boundnw {\pi-{\dist}/2+3\sinn{{\dist}/2}}
\def \boundw {\pi-{\dist}+4\sinn{{\dist}/2}}
\def\tew{2-TE$_w$}
\def\ten{2-TE$_{f2f}$}
\newtheorem{theorem}{Theorem}[section]
\newtheorem{lemma}[theorem]{Lemma}
\theoremstyle{definition}
\author{
Konstantinos Georgiou\affiliationmark{1}
\and 
George Karakostas\affiliationmark{2}
\and 
Evangelos Kranakis\affiliationmark{3}
}
\title[Search-and-Fetch with 2 Robots on a Disk]{Search-and-Fetch with 2 Robots on a Disk: \\
Wireless and Face-to-Face Communication Models\thanks{Research supported in part by NSERC of Canada.}}
\affiliation{
  Dept. of Mathematics, 
Ryerson University, 
Toronto, ON, Canada\\
Dept. of Computing \& Software, 
McMaster University, 
Hamilton ON, Canada\\
School of Computer Science, Carleton University, Ottawa ON, Canada}
\keywords{Disk, Exit, Robot, Search and Fetch, Treasure}
\begin{document}
\publicationdetails{21}{2019}{3}{20}{4884}
\maketitle
\begin{abstract}
We initiate the study of a problem on {\em searching and fetching}, motivated by real-life surveillance and search-and-rescue operations where unmanned vehicles, \textit{e.g.} drones, search for victims in areas of a disaster.
In \emph{treasure-evacuation}, we are interested in designing algorithms that minimize the time it takes for a treasure (a victim) to be discovered and brought (fetched) to the exit (shelter) by any of two robots (rescuers) which are performing in a distributed environment
(the case of searching and fetching with 1 robot has been previously considered).
The communication protocol between the robots is either {\em wireless}, where information is shared at any time, or {\em face-to-face}, where information can be shared only if the robots meet. 
For both models we obtain upper bounds for fetching the treasure to the exit. 
Our algorithms make explicit use of the distance between the treasure and the exit, which is assumed to be known in advance, showing this way how partial information of the unknown input can be beneficial. 
Our main technical contribution pertains to the face-to-face model. More specifically, we demonstrate how robots can exchange information without meeting, effectively achieving a highly efficient treasure-evacuation protocol which is minimally affected by the lack of distant communication. 
Finally, we complement our positive results above by providing a lower bound in the face-to-face model.
\end{abstract}

\section{Introduction}
We introduce the study of a distributed problem on {\em searching and fetching}
called {\em treasure evacuation}. Two robots are placed
at the center of a unit disk, while an exit and a treasure 
lie at unknown positions on the perimeter of the disk.
Robots search with maximum speed $1$, and they detect a point of interest (either the treasure or the exit) only if they pass over it. The exit is immobile, while the treasure can be carried by any of the robots.
The goal of the search is for at least one of the robots
to bring (fetch) the treasure to the exit, \textit{i.e.} evacuate the treasure, in the minimum possible completion time. 
The robots do not have to evacuate, they only need to co-operate, possibly by sharing information, so as to learn the locations of the points of interest and bring the treasure to the exit. 
Contrary to previous work, this is the first time an explicit ordering on the
tasks to be performed is imposed on two moving robots. This makes the
problem inherently different in nature and more difficult than similarly looking results.

Special to our problem is also the underlying advice-model we consider, \textit{i.e.} that even though the locations of the exit and the treasure are unknown, their distance is known and is considered part of the input. 
Interestingly, 
finding an optimal algorithm turns out to be a challenging task even when the robots have this knowledge. 
We propose treasure-evacuation protocols
in two communication models. In the {\em wireless} model robots exchange information instantaneously and at will, while in the {\em face-to-face} model information can be exchanged only if the robots meet. We aim at incorporating the knowledge of the exact distance between the exit and the treasure into our algorithm designs. We offer algorithmic techniques such as planning ahead, timing according to the explicit task ordering, and retrieval of unknown information through inference and not communication.

Our problem is motivated by real-life surveillance and search-and-rescue operations where unmanned vehicles, \textit{e.g.} drones, search for victims in areas of a disaster. 
Indeed, consider a group of rescuer-mobile-agents (robots), initially located strategically in a central position of a domain. 
When alarm is triggered and a distress signal is received, robots need to locate a victim (the treasure) and bring her to safety (the exit). Our problem shares similarities also with classic and well-studied cops-and-robbers games;  robots rest at a central position of a domain (say, in the centre of a disk as in our setup) till an alarm is triggered by some ``robber'' (the treasure in our case). Then, robots need to locate the stationary robber and subsequently bring him to jail (the exit). 
Interestingly, search-and-fetch type problems resemble also situations that abound in fauna, where animals hunt for prey which is then brought to some designated area, \textit{e.g.} back to the lair. As such, further investigation of similar problems will have applications to real-life rescue operations, as well as to the understanding of animal behavior, as it is common in all search problems.

\subsection{Problem Definition \& Contributions}

A treasure and an exit are located at unknown positions on the perimeter of a unit-disk and at arc distance $\dist$ (in what follows all distances will be arc-distances, unless specified otherwise). Robots, denoted by \ro1, \ro2,  start from the center of the disk, and can move anywhere on the disk at constant speed 1. Each of the robots detects the treasure or the exit only if its trajectory passes over that point on the disk. Once detected, the treasure can be carried by a robot at the same speed. 
We refer to the task of bringing the treasure to the exit as \emph{treasure-evacuation}. We use
the abbreviations $T, E$ for the treasure and the exit, respectively.
For convenience, in the sequel we will refer to the locations
of the exit and the treasure as {\em points of interest (PoI)}.
For a PoI $I$ on the perimeter of the disk, we also write $I=E$ ($I=T$) to indicate that the exit (treasure) lies in point $I$.
For a point $B$, we also write $B=null$ to denote the event that neither the treasure nor the exit is placed on $B$.

We focus on the following online variants of treasure-evacuation with 2 robots, where the exact distance $\dist$ between $T,E$ is known, but not their positions: \\
- In \textbf{\tew} (Section~\ref{sec: w-model}), information between robots is shared continuously in the time horizon, \textit{i.e.} messages between them are exchanged instantaneously and at will with no restrictions and no additional cost or delays. \\
- In \textbf{\ten} (Section~\ref{sec: f2f-model}), the communication protocol between the robots is face-to-face (non-wireless)---abbreviated F2F (or f2f), where information can be exchanged only if the robots meet at the same point anywhere.

Part of our contribution is that we demonstrate how robots can utilize the knowledge of the arc-distance $\dist$ between the points of interest. We propose protocols that induce worst case evacuation time $1+ \boundw$ 
for the wireless model and 
$1+\pi-\dist/2+3\sinn{\dist/2}$
for the face-to-face model. 
The worst case cost for the two problems becomes $1+2 \sqrt{3}+\frac{\pi }{3}\approx 5.5113$ (when $\dist= \frac{2 \pi }{3}\approx2.0944$)
and $1+2 \sqrt{2}+\pi -\sec ^{-1}(3)\approx 5.73906$ (when $\dist=2 \sec ^{-1}(3)\approx2.46192$), respectively.
The upper bound in the face-to-face model, which is our main contribution, is the result of a non-intuitive evacuation protocol that allows robots to exchange information about the topology without meeting, effectively bypassing their inability to communicate from distance.
Note that our results induce upper bounds with respect to competitive analysis as well. Indeed, the optimal solution, given that the input is known, equals $opt_\dist=1+2\sinn{\dist/2}$, hence the competitive ratio we achieve, for fixed $\dist$, can be computed by scaling the worst case performance we achieve by $opt_\dist$. In both cases, the worst case competitive ratio becomes $1+\pi$, for $\dist\rightarrow 0$. 
Finally, we complement our results above by showing that any algorithm in the face-to-face model needs time at least $1+\pi/3+4\sinn{\alpha/2}$, if $\alpha \in [0,2\pi/3]$ and at least $1+\pi/3+2\sinn{\alpha}+2\sinn{\alpha/2}$, if $\alpha \in [2\pi/3,\pi]$.
A graphical comparison of our results can be seen in Figure~\ref{fig: compUBLB}.
\begin{figure}[!ht]
                \centering
                \includegraphics[scale=0.5]{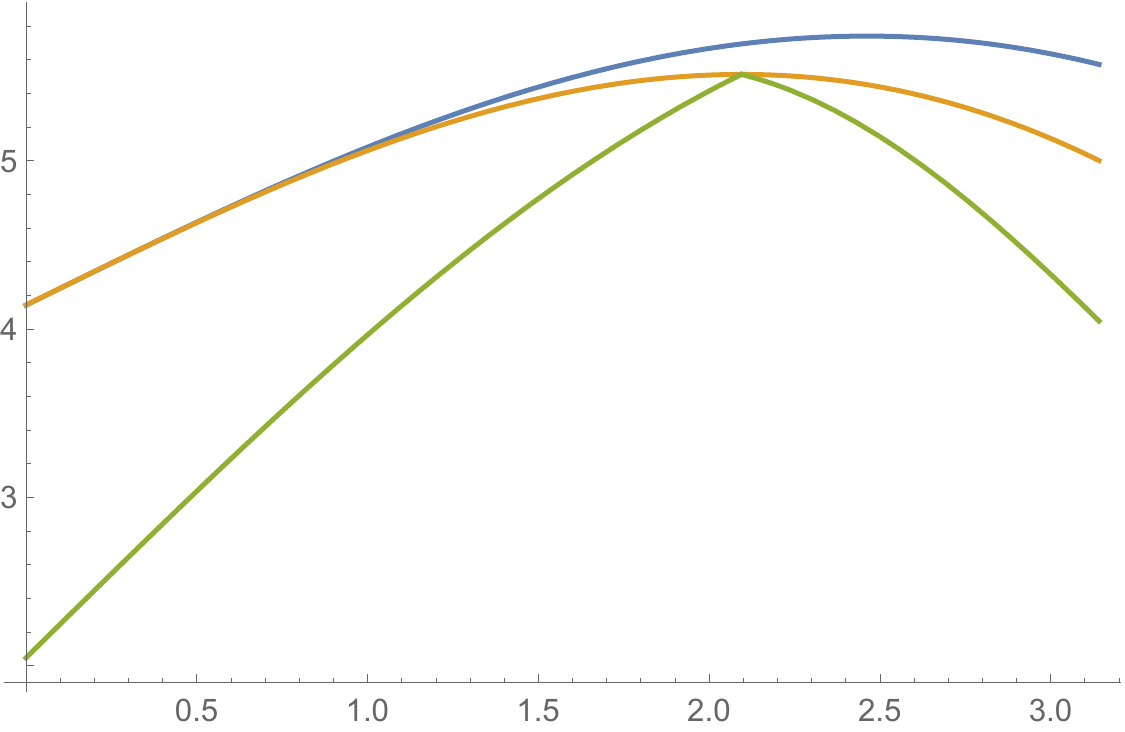}
                \caption{Comparison between the performance of the wireless algorithm (yellow curve), the performance of the f2f algorithm (blue curve) and the provided lower bound (green curve) depicted on the vertical axis, as a function of $\dist$ (horizontal axis).}
                \label{fig: compUBLB}
\end{figure}

\subsection{Problem and Model Motivation}

From a technical perspective, our communication models are inspired by the recent works on evacuation problems~\cite{CGGKMP,CGKNOV,CKKNOS}. Notably, the associated search problems are inherently different than our problem which is closer in nature to search-type, treasure-hunt, and exploration problems. Also, our mathematical model features (a) a distributed setting (b) with objective to minimize time, and (c) where different communication models are contrasted. None among (a),(b),(c) are well understood for search games, and, to the best of our knowledge, they have not been studied before in this combination. 

Specific to the problem we study are the number of robots (2 and not arbitrarily many - though our results easily extend to swarms of robots), the domain (disk), and the fact the robots have some knowledge about the PoI. Although extending our results to more generic situations is interesting in its own right, the nature of the resulting problems would require a significantly different algorithmic approach. Indeed, our main goal is to study how limitations in communication affect efficiency, which is best demonstrated when the available number of robots, and hence computation power, is as small as possible, \textit{i.e.} for two robots. In fact, it is easy to extend our algorithms for the $n$-robot case. 

Notably, search-and-fetch problems are challenging even for 1 robot as demonstrated in~\cite{GKKa16}. 
In particular, the work of \cite{GKKa16} implies that establishing provably optimal evacuation protocols for 2-robots is a difficult task, even when the domain is the disk.
Indeed, the best-known trajectory for 1 robot in \cite{GKKa16}, which is also conjectured to be optimal, exhibits delicate ``jumps'' that effectively save an almost negligible amount of the termination time, still they improve upon the naive approach. 
 Nevertheless, we view the disc domain that we study as natural. Indeed, a basic setup in search-and-rescue operations is that rescuer-robots inhabit in a base-station, and they stay inactive till they receive a distress signal. As it is common in real-life situations, the signal may only reveal partial information about the location of a victim, \textit{e.g.} its distance from the base-station, along with the distance between the points. When there are more than one PoI to be located, this kind of information suggests that the points lie anywhere on co-centric circles. When the points are equidistant from the base-station, robots need only consider a disk, as it is the case in our problem. We believe that with enough technical and tedious work, our results can also extend to non-equidistant points, however the algorithmic significance of the proposed distributed solutions may be lost in the technicalities. 

Finally, specific to our problem is the underlying advice model. Indeed, robots have access to partial information (the exact distance of the hidden objects) about the unknown input (the exact locations of the hidden objects). 
Partial knowledge of the input is interesting due to efficiency-information tradeoffs that are naturally induced by online problems, 
and which are commonly studied in competitive analysis, \textit{e.g.} see~\cite{HIKL99,CGDKM1618} and \cite{GKKa16}.
In our search-and-fetch problem, the partial information of the distance of the hidden objects demonstrates that robots with primitive communication capabilities are in fact not much less powerful than in the wireless model.
The reader may also view this piece of advice as an algorithmic challenge in order to bypass the uncertainty regarding the locations of the PoI. 
Notably, our algorithms adapt strategies as a function of the distance of the PoI, 
trying to follow protocols that would allow them to detect the actual positions of the points without necessarily visiting them. As an easy example, note that if a robot has explored already a contiguous arc of length $\alpha+\epsilon$, the discovery of a PoI reveals the location of the other $\alpha$-arc distant away PoI (our algorithm makes use of distance $\alpha$ in a much more sophisticated way). As a result, had we assumed that distances are unknown, robots may not be able to deduce such important information about the topology using partial exploration, and the problem would require an inherently different algorithmic approach.

\subsection{Related Work}
\label{sec: related work}

Traditional search is concerned with finding an object
with specified properties within a search space.
Searching in the context of computational problems is usually more challenging
especially when the environment is unknown to the searcher(s) (see \cite{ahlswede1987search,alpern2003theory,stone1975theory}).
This is particularly evident in the context of robotics 
whereby exploration is taking place within a given geometric 
domain by a group of autonomous but communicating robots. 
The ultimate goal is to design an algorithm so as to 
accomplish the requirements of 
the search (usually locating a target of unknown a priori
position) while at the same time obeying the computational and geographical
constraints. The
input robot configuration must also accomplish the task in the
minimum possible amount of time~\cite{berman1998line}. 

Search has a long history. There is extensive and varied research
and several models have been proposed and
investigated in the mathematical and theoretical
computer science literature with particular emphasis
on probabilistic search~\cite{stone1975theory},
game theoretic applications~\cite{alpern2003theory},
cops and robbers~\cite{anthony2011game},
classical pursuit and evasion~\cite{nahin2012chases},
search problems as related to group testing~\cite{ahlswede1987search},
searching a graph~\cite{koutsoupias1996searching},
and many more. 
A survey of related search and pursuit evasion problems
can be found in \cite{chung2011search}. In pursuit-evasion,
pursuers want to capture evaders who try to avoid capture.
Examples include {\em Cops and Robbers} (whereby the cops
try to capture the robbers by moving along the vertices 
of a graph), {\em Lion and Man} (a geometric version of
cops and robbers where a lion is to capture a man in either
continuous or discrete time), etc.
Searching for a motionless point target has some similarities with the lost at sea problem, \cite{G61,I67}, the cow-path problem \cite{beck1964linear,bellman1963optimal}, and with the plane searching problem \cite{BS95}. This last paper also
introduced the ``instantaneous contact model'', which is 
referred to as wireless model
in our paper.
When  the mobile robots do not know the geometric environment in advance then researchers  are concerned with exploring \cite{AH00,AKS02,DKP91,HIKK01}. Coordinating the exploration of a team of robots is a main theme in the robotics community \cite{B05,T01,Y98} and often this is combined with the mapping of the terrain and the position of the robots within it \cite{K94,PY}.

Evacuation for grid polygons has been studied in \cite{FGK10}
from the perspective of constructing centralized evacuation plans, resulting in the fastest possible evacuation from the rectilinear environment.
There are certain similarities of our problem to
the well-known evacuation problem on an infinite line (see \cite{baezayates1993searching} and the recent~\cite{Groupsearch})
in that the search is for an unknown target. However, in this work the adversary has limited possibilities since search is on a line.
Additional research and variants on this problem
can be found in~\cite{demaine2006online} (on searching
with turn costs), \cite{kao1996searching} (randomized algorithm for the cow-path problem),
\cite{kao1998optimal} (hybrid algorithms), 
\cite{BampasCGIKKP16} (searching with different speeds), 
and many more.

A setting similar to ours is presented in the recent
works~\cite{CGGKMP,CGDKM1618,CGKNOV,CKKNOS,LMS16,BLLSW17,CGS18}
where
algorithms are presented in the wireless and non-wireless
(or face-to-face) communication models for the evacuation 
of a team of robots. The ``search domain'' in~\cite{CGGKMP,CGDKM1618,CGKNOV,CGS18}
is a unit circle (while in \cite{CKKNOS} the search
domain is a triangle or square), however, 
unlike our search problem,
in these papers
all the robots are required to evacuate from an unknown exit 
on the perimeter. Moreover, in none of these papers is there
a treasure to be fetched to the exit. 
Finally, in some more recent papers~\cite{Priority-3-servants-18, asymtotic18}, Czyzowicz et al. considered the problem of evacuating a distinguished (as in our case) mobile (unlike our case) robot. 

The problem we consider is a direct generalization of the search-and-fetch problem of~\cite{GKKa16}
with 1 robot. Unlike in our problem, searching only with 1 robot requires an almost orthogonal approach in order to improve upon the naive strategies. Indeed, the best known trajectory of~\cite{GKKa16} employs alternating moves along chords and arcs whose lengths depend on the distance of the hidden items. The induced gain is comparable to the difference between the length of an arc and its corresponding chord, and even though this quantity is not significant, it is conjectured that it is indeed the best one can achieve. In contrast, when searching with 2 robots there are more significant gains by carefully synchronizing the moves of the robots.

Our work 
is also an attempt to analyze theoretically search-and-fetch problems that have been studied by the robotics community since the 90's, \textit{e.g.} see~\cite{JWE97}. A scenario similar to ours (but only for 1 robot) has been introduced by Alpern in~\cite{Alpern11}, where the domain was discrete (a tree) and the approach/analysis resembled that of standard search-type problems~\cite{alpern2003theory}. In contrast, our problem is of distributed nature, and our focus is to demonstrate how robots' communication affects efficiency under the assumption that partial information about the input is known. 

\section{Wireless Model}
\label{sec: w-model}


As a warm-up we present in this section an upper bound for the wireless model, which will also serve as a reference for the more challenging face-to-face model. 
The algorithmic solution we propose is simple and it is meant to help the reader familiarize with basic evacuation trajectories that will be used in our main contribution pertaining to the face-to-face model. 
\begin{theorem}\label{thm: wireless upper bound}
For every $\dist\in [0,\pi]$, problem \tew\ can be solved in time $1+\boundw$. 
\end{theorem}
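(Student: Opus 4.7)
The plan is to exhibit a simple wireless sweep protocol and upper-bound its worst-case completion time against an adversarial placement of $T,E$. In Phase~1 (duration~$1$) both robots travel radially to a fixed boundary point~$P$. In Phase~2, starting at time~$1$, one robot walks clockwise and the other counterclockwise along the perimeter at unit speed, so by time $1+t$ the union of their swept arcs is precisely the arc $[-t,t]$ centered at~$P$. Whenever a robot encounters $T$ or $E$, it broadcasts the discovery instantaneously; afterwards the swept history together with the known distance~$\alpha$ pin down (up to at most two candidates) the location of the second point, and the robots traverse the required chord(s) to fetch $T$ to $E$.

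For the analysis, I parametrize adversarial placements by the angle $\theta\in(0,\pi]$ at which the first point $I_1$ is found (WLOG by the clockwise robot, at time $1+\theta$) and by $I_1\in\{T,E\}$. Feasibility (``$I_2$ cannot be in the already-swept region $[-\theta,\theta]$'') forces $I_2$ to lie at angle $\theta+\alpha$ whenever $\theta\geq\alpha/2$; otherwise both candidates $\theta\pm\alpha$ remain feasible. Two extremal configurations emerge:
(a) \emph{Antipodal}: $T,E$ at $\pm(\pi-\alpha/2)$, giving the latest possible first discovery $t=\pi-\alpha/2$ with simultaneous discoveries by both robots and a single-chord fetch of $2\sin(\alpha/2)$, total $1+\pi-\alpha/2+2\sin(\alpha/2)$;
(b) \emph{Same-side}: $E$ at angle $\pi-\alpha$, $T$ at angle~$\pi$, adversary sets $I_1=E$, yielding first discovery at $t=\pi-\alpha$ followed by a ``double move'' of chord cost $4\sin(\alpha/2)$, total $1+\pi-\alpha+4\sin(\alpha/2)$.

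A short trigonometric check (using $2\sin(\alpha/2)>\alpha/2$ on $(0,\pi]$) shows that (b) dominates~(a), yielding the claimed bound. The remaining work is to rule out intermediate placements: for same-side $I_2$ with $\theta\in(0,\pi-\alpha]$ the double move has fixed cost $4\sin(\alpha/2)$ and the total $1+\theta+4\sin(\alpha/2)$ is monotone in $\theta$, maximized at $\theta=\pi-\alpha$; for ``wrap-around'' $I_2$ with $\theta\in(\pi-\alpha,\pi-\alpha/2)$, assigning the counterclockwise robot (which is closer to $I_2$) gives total $1+\theta+2\sin(\pi-\theta-\alpha/2)+2\sin(\alpha/2)$, and a one-line calculus argument shows this is again maximized at the boundary $\theta=\pi-\alpha$, matching~(b).

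The main obstacle I anticipate is the ambiguity subcase $\alpha>2\pi/3$ within~(b), where the two feasible locations for $T$ (angles $\pi$ and $\pi-2\alpha$) force the robots to check both candidates in parallel. Here one must verify via a direct chord-identity computation that the parallel candidate-check plus final fetch still costs at most $4\sin(\alpha/2)$; the worst placement turns out to be $T$ at angle~$\pi$, where the clockwise robot's double move realizes the bound exactly, while the alternative placement (via the counterclockwise robot through $\pi-2\alpha$) is strictly cheaper because $2\sin(3\alpha/2-\pi)\leq 2\sin(\alpha/2)$ on $[2\pi/3,\pi]$.
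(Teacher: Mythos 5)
Your algorithm uses a single deployment for every $\alpha$: both robots go to the same boundary point $P$ and sweep in opposite directions. The paper's Algorithm~\ref{def: wireless algo} does this only for $\alpha\leq 2\pi/3$; for $\alpha>2\pi/3$ it deploys the robots to \emph{antipodal} points sweeping in the same direction, and the paper explicitly warns that strategies which do not adapt the deployment to $\alpha$ fall strictly short of the bound $1+\pi-\alpha+4\sin(\alpha/2)$. This is where your proof breaks: your case split treats ``wrap-around'' placements ($\theta\in(\pi-\alpha,\pi-\alpha/2)$) with the formula $1+\theta+2\sin(\pi-\theta-\alpha/2)+2\sin(\alpha/2)$, which presupposes that at time $\theta$ the second point's location is pinned down; that requires $\theta\geq\alpha/2$, and for $\alpha>2\pi/3$ the interval $(\pi-\alpha,\alpha/2)$ is nonempty, so there are wrap-around placements where the ambiguity persists. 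Your final paragraph only checks the single boundary configuration $\theta=\pi-\alpha$, not these.

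Moreover, the gap is not repairable by a cleverer continuation after the first discovery, so the whole single-deployment route fails for large $\alpha$. Concretely, take $\alpha=3$, place $E$ at clockwise arc distance $0.6$ from $P$ and $T$ at clockwise distance $0.6+\alpha=3.6$. Under your Phase~2, the first discovery is $E$, by the clockwise robot at time $1+0.6$ (neither robot has passed over $T$ by then). At that instant the counterclockwise robot is at chord distance $2\sin(\pi-0.6-\alpha/2)\approx 1.72$ from $T$, the clockwise robot at $2\sin(\alpha/2)\approx 2.00$, and $T$ must still travel the chord $2\sin(\alpha/2)\approx 2.00$ to reach $E$; hence even if an oracle revealed both locations at that moment, completion takes at least $1+0.6+1.72+2.00\approx 5.32$, whereas the claimed bound is $1+\pi-3+4\sin(1.5)\approx 5.13$. (Maximizing over the adversary's choice of $\theta\in(\pi-\alpha,\alpha/2)$ shows such violations exist for every $\alpha>2\pi/3$; the critical $\theta$ is $2\pi/3-\alpha/2$.) So your argument is essentially the paper's for $\alpha\leq 2\pi/3$, but for $\alpha>2\pi/3$ the theorem cannot be proved with the same-point deployment at all: you need the $\alpha$-dependent deployment switch (or some other mechanism keeping the first discovery early relative to $\pi-\alpha$), which is precisely the ingredient the paper's algorithm adds and your proposal omits.
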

To prove Theorem~\ref{thm: wireless upper bound}, we propose Algorithm~\ref{def: wireless algo} that achieves the promised bound. 
Intuitively, our algorithm follows a greedy like approach, adapting its strategy as a function of the distance $\dist$ of the PoI. If $\dist$ is small enough, then the two robots  move together to an arbitrary point on the disk and start exploring in opposing directions. 
Otherwise the two robots move to two antipodal points and start exploring in the same direction. 
Exploration continues till a PoI is found. 
When that happens, the robot that can pick up the treasure and fetch it to the exit in the fastest time (if all locations have been revealed) does so, otherwise remaining locations are tried exhaustively. Detailed descriptions of the evacuation protocol can be seen in Algorithm~\ref{def: wireless algo}, complemented by Figure~\ref{fig: general pic 3 dist - wireless}.

Noticeably, the performance analysis we give is tight, meaning that for every $\alpha\geq 0$, there are configurations (placements of the PoI) for which the performance of the algorithm is exactly $1+\boundw$. Most importantly, the performance analysis makes explicit that two specific naive algorithms that do not adapt strategies together with $\alpha$ are bound to perform strictly worse than our upper bound. Also, the achieved upper bound should be contrasted to the upper bound for the face-to-face model (which is achieved by a much more involved algorithm), which at the same time is only $\alpha/2-\sinn{\alpha/2}$ more costly than the bound we show in the wireless model. 

Algorithm~\ref{def: wireless algo} takes advantage of the fact that robots can communicate to each other wirelessly. This also implies that lack of message transmission is effectively another method of information exchange. In what follows point $A$ will always be the starting point of \ro2, and $A'$ denotes its antipodal point. For the sake of the analysis and w.l.o.g. we will assume that \ro2 is the one that first finds a PoI $I=\{E,T\}$, say at time $x:=\arcc{AI}$. We call $B,C$ the points that are at clockwise and counter-clockwise arc-distance $\alpha$ from $I$ respectively. 
Figure~\ref{fig: general pic 3 dist - wireless} depicts the PoI encountered.

\begin{figure}[!ht]
                \centering
                \includegraphics[scale=0.5]{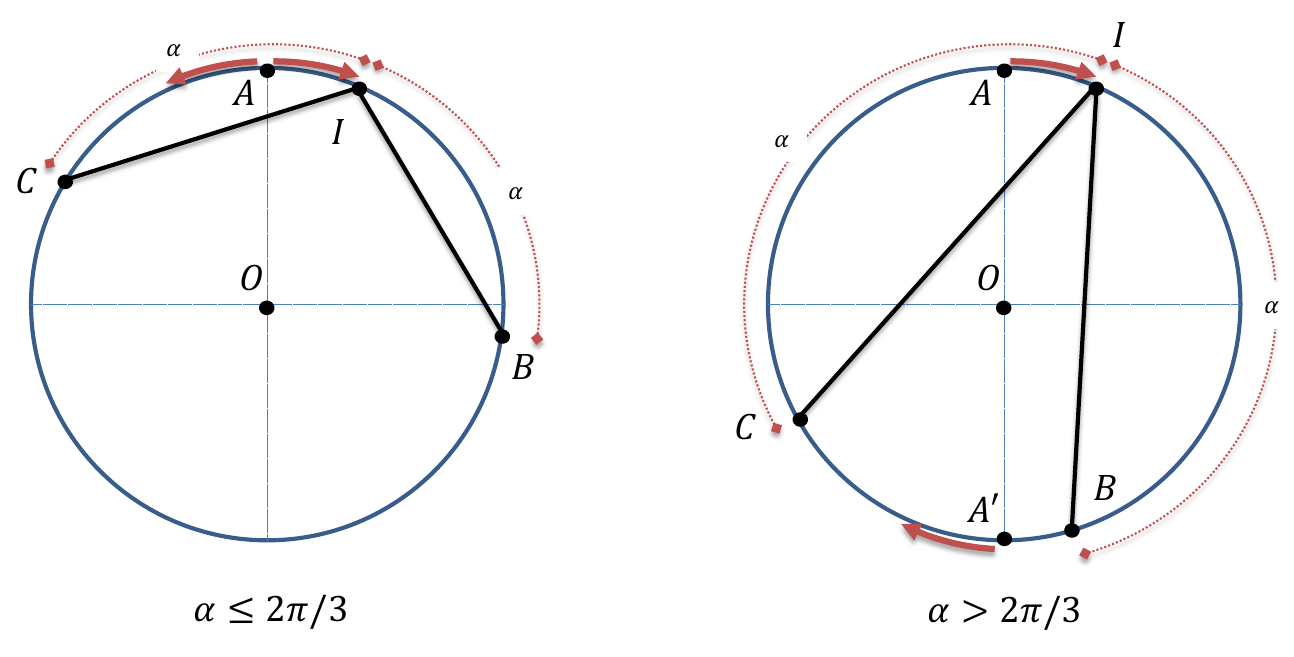}
                \caption{The points of interest for our Algorithm \ref{def: wireless algo}.}
                \label{fig: general pic 3 dist - wireless}
\end{figure}

The description of Algorithm~\ref{def: wireless algo} is from the perspective of the robot that finds first a PoI, that we always assume is \ro2. Next we assume that the finding of any PoI is instantaneously observed by the two robots. Also, if at any moment, the positions of the PoI are learned by the two robots, then the robots attempt a ``confident evacuation'' using the shortest possible trajectory. This means for example that if the treasure is not picked up by any robot, then the two robots will compete in order to pick it up and return it to the exit, moving in the interior of the disk. 

\begin{algorithm}
\caption{Wireless Algorithm}
\label{def: wireless algo}
\begin{description}
\itemsep=-\parsep
\item
{\bf Step 1.} If ${\dist}\leq 2\pi/3$, the two robots move together to an arbitrary point on the ring and start exploring in opposing directions, else they move to arbitrary antipodal points $A,A'$ on the cycle and start moving in the same direction.
\item
{\bf Step 2.} Let $I$ be the first PoI discovered by \ro2, at time $x:=\arcc{AI}$. Let $B,C$ be the points that are at clockwise and counter-clockwise arc-distance $\alpha$ from $I$, respectively.
\item
{\bf Step 3.} If $x\geq {\dist}/2$ then robots learn that the other PoI is in $B$, 
else \ro2 moves to $B$, \ro1 moves to $C$.
\item
{\bf Step 4.} Evacuate
\end{description}
\end{algorithm}

Correctness of Algorithm \ref{def: wireless algo} is straightforward, since the two robots follow a ``greedy-like evacuation protocol'' (still, they use different starting points depending on the value of $\dist$). Also, the performance analysis of the algorithm, effectively proving Theorem~\ref{thm: wireless upper bound}, 
is a matter of a straightforward case-analysis. We note that our worst-case analysis is tight, in that for every $\alpha\geq0$ there exist configurations in which the performance of Algorithm \ref{def: wireless algo} is exactly as promised by Theorem~\ref{thm: wireless upper bound}. 
Moreover, we may assume that $\dist>0$ as otherwise the problem is solved when one PoI is found. 

Note that our algorithm performs differently when ${\dist}\leq 2\pi/3$ and when ${\dist}>2\pi/3$. 
Let $x:=\arcc{AI}$ be the time that \ro2 has spent searching till first PoI $I$ is discovered. 
Then it must be that $x\leq {\dist}/2$ and $x\leq \pi-\dist$ for the cases ${\dist}\leq 2\pi/3$ and ${\dist}>2\pi/3$ respectively (see also Figure~\ref{fig: general pic 3 dist - wireless}). This will be used explicitly in the proof of the next two lemmata. We also assume that \ro2 always moves clockwise starting from point $A$. \ro1 either moves counter-clockwise starting from $A$, if ${\dist}\leq 2\pi/3$, or it moves clockwise starting from the antipodal point $A'$ of $A$, if ${\dist}>2\pi/3$. In every case, the two robots move along the perimeter of the disk till time $x$ when \ro2 transmits the message that it found a PoI.

The performance of Algorithm~\ref{def: wireless algo} is described in the next two lemmata which admit proofs by case analyses. Each of them examines the relative position of the starting point of robot \ro2 (which finds a PoI first) and the two PoI. 

\begin{lemma}\label{lem: wireless Performance when A is in}
Let $A$ be the starting point of \ro2 which is the first to discover a PoI $I$. Let also the other PoI be at $C$, where $\arcc{CI}=\alpha$. If $A$ lies in the arc $\arcc{CI}$, then the performance of Algorithm \ref{def: wireless algo} is $1+\boundw$, for all $\dist\in [0,\pi]$. 
\end{lemma}

\begin{proof}
For the case analysis, we refer to Figure~\ref{fig: wireless-algo A in}. Note that robots spend time 1 to reach the periphery of the disk. Below we calculate the remaining time until evacuation. 
\begin{figure}[!ht]
                \centering
                \includegraphics[scale=0.5]{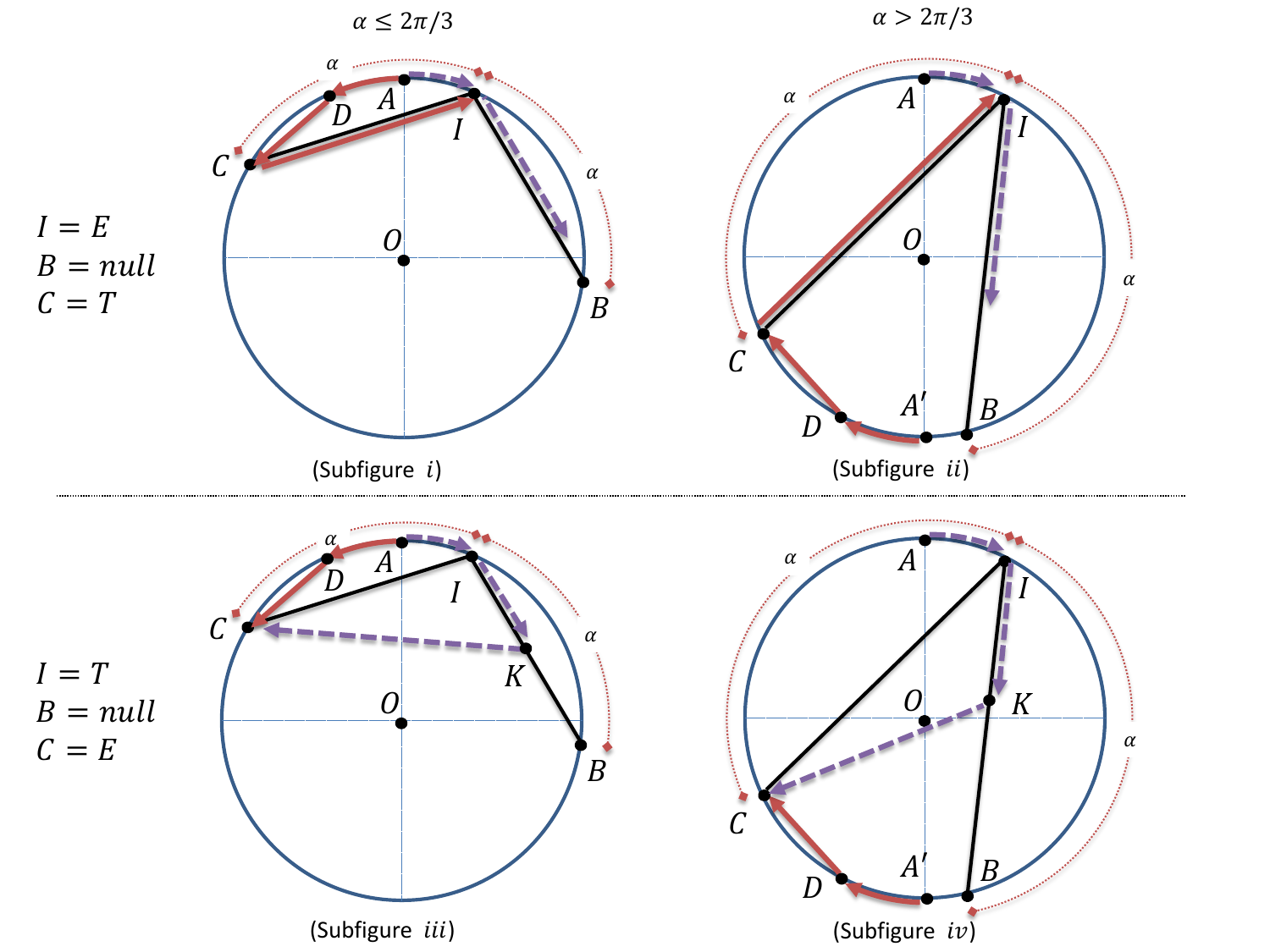}
                \caption{The performance of the wireless algorithm, when the starting point $A$ lies in the arc $\arcc{CI}$ of the two PoI. The trajectory of \ro{2} is depicted with the dotted purple curve, while the trajectory of \ro{1} with the solid red curve.}
                \label{fig: wireless-algo A in}
\end{figure}
At time $x$ the cases are as follows.

\begin{description}
\item[($I=E, B=null, C=T~\&~{\dist}\leq2\pi/3$):] Let \ro1 be at point $D$, \textit{i.e.} $\arcc{DA}=x$, see also Figure~\ref{fig: wireless-algo A in}i.  Then \ro1 moves along the chord $CD$, it locates the treasure and returns it to the exit $I$, with total cost
\begin{align*}
 \arcc{DA}+\barr{DC}+\barr{CI}
= & x+2\sinn{{\dist}/2-x}+2\sinn{{\dist}/2} \\
 \stackrel{(Lemma~\ref{lem:ti}\ref{1})} {\leq}
& \boundw.
\end{align*}
\item[($I=E, B=null, C=T~\&~{\dist}>2\pi/3$):] Let \ro1 be at point $D$, \textit{i.e.} $\arcc{DA'}=x$, see also Figure~\ref{fig: wireless-algo A in}ii. Then \ro1 moves along the chord $CD$, it locates the treasure and returns it to the exit $I$, with total cost
\begin{align*}
 \arcc{DA'}+\barr{DC}+\barr{CI}
\leq & x+2\sinn{\pi-\dist-x/2}+2\sinn{{\dist}/2} \\ \stackrel{(x\leq \pi-\dist)}{\leq} & \boundw.
\end{align*}

\item[($I=T, B=null, C=E~\&~{\dist}\leq2\pi/3$):] 
When \ro2 finds the treasure, it picks it up, and start moving along chord $IB$, see also Figure~\ref{fig: wireless-algo A in}iii. Meanwhile, \ro1 at time $x$ is at some point, say, $D$, and crosscuts through $CD$ to check the possible point $C$. When \ro1 visits $C$, \ro2 learns where the exit is, so starting from point, say, $K$, it moves along the line segment $KC$ and evacuates. Note that $K$ lies always in the the line segment $IB$, since $\barr{CD}\leq \barr{CI}=\barr{IB}$). The total cost then is 
\begin{align*}
&\arcc{AI}+\barr{IK}+\barr{KC}
=\arcc{AI}+\barr{CD}+\barr{KC} \\
&\leq \arcc{AI}+\barr{CD}+\max\{ \barr{CI}, \barr{CB} \} \\
&= x+2\sinn{{\dist}/2-x}+\max\{ 2\sinn{{\dist}/2},2\sinn{\dist} \} \\
&\stackrel{(Lemma~\ref{lem:ti}\ref{4})}{\leq} x+2\sinn{{\dist}/2-x}+ 2\sinn{\dist} \\
&\stackrel{(Lemma~\ref{lem:ti}\ref{1})}{\leq} \boundw.
\end{align*}

\item[($I=T, B=null, C=E~\&~{\dist}>2\pi/3$):] 
When \ro2 finds the treasure, it picks it up, and start moving along chord $IB$, see also Figure~\ref{fig: wireless-algo A in}iv.
Meanwhile, \ro1 at time $x$ is at some point, say, $D$, and crosscuts through $CD$ to check the possible point $C$. When \ro1 visits $C$, \ro2 learns where the exit is, so starting from point, say $K$, it moves along the line segment $KC$ and evacuates. Note that $K$ lies always in the the line segment $IB$, 
since 
\begin{align*}
\barr{CD} =&  2\sinn{\pi/2-{\dist}/2-x} \\
\stackrel{(Lemma~\ref{lem:ti}\ref{5})}{\leq}
& 2\sinn{{\dist}/2}
= \arcc{IB}.
\end{align*}
In return, the cost becomes 
\begin{align*}
& \arcc{AI}+\barr{IK}+\barr{KC} 
=\arcc{AI}+\barr{CD}+\barr{KC} \\
&\leq \arcc{AI}+\barr{CD}+\max\{ \barr{CI}, \barr{CB} \} \\
&= x+2\sinn{\pi/2-{\dist}/2-x}+\max\{ 2\sinn{{\dist}/2},2\sinn{\dist} \} \\
&\stackrel{(Lemma~\ref{lem:ti}\ref{4})}{\leq} x+2\sinn{\pi/2-{\dist}/2-x}+ 2\sinn{{\dist}/2} \\
&\stackrel{(Lemma~\ref{lem:ti}\ref{2})}{\leq} \boundw.
\end{align*}
\end{description}

\end{proof}

\begin{lemma}\label{lem: wireless Performance when A is out}
Let $A$ be the starting point of \ro2 which is the first to discover a PoI $I$. Let also the other PoI be at $B$, where $\arcc{IB}=\alpha$. If $A$ lies outside the arc $\arcc{IB}$, then the performance of Algorithm \ref{def: wireless algo} is $1+ \boundw$, for all $\dist\in [0,\pi]$. 
\end{lemma}

\begin{proof}
For the case analysis below, we rely on Figure~\ref{fig: wireless-algo A out}.
As before, robots spend time 1 to reach the periphery of the disk. Below we calculate the remaining time until evacuation. 
\begin{figure}[!ht]
                \centering
                \includegraphics[scale=0.50]{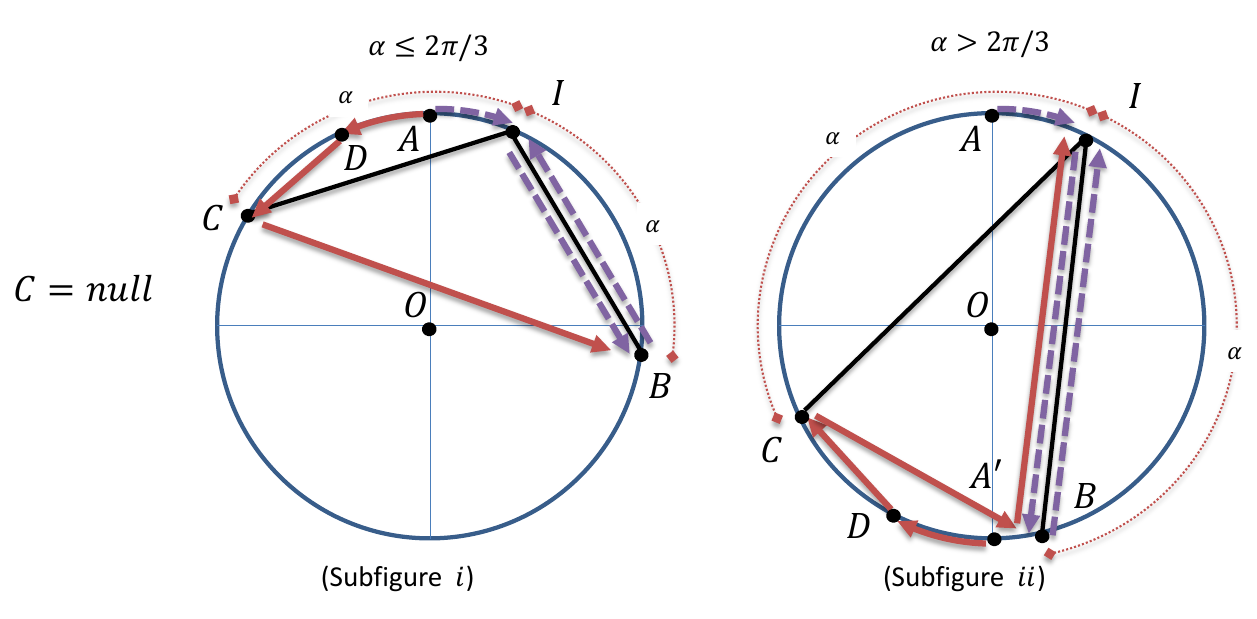}
                \caption{The performance of the wireless algorithm, when the starting point $A$ lies outside the arc $\arcc{IB}$ of the two PoI. The trajectory of \ro2 is depicted with the dotted purple curve, while the trajectory of \ro1 with the solid red curve.}
                \label{fig: wireless-algo A out}
\end{figure}
At time $x$ the cases we consider are as follows.
\begin{description}
\item[($C=null~\&~{\dist}\leq2\pi/3$):] 
After \ro2 discovers $I$ it will move along chord $IB$ to discover the other PoI, see also Figure~\ref{fig: wireless-algo A out}i. In particular, since $I$ is visited before $C$ we have $x\leq {\dist}/2$. 
If the treasure is in $B$, then the total cost would be 
\begin{align*}
\arcc{AI}+2\barr{IB} 
= & x+2\sinn{{\dist}/2} \\
\leq & {\dist}/2+4\sinn{{\dist}/2} \\
\leq & \boundw,
\end{align*}
(since ${\dist}\leq 2\pi/3$), while if the treasure is in $I$, then the cost would be $\pi-\dist+2\sinn{\dist/2}$.

\item[($C=null~\&~{\dist}>2\pi/3$):] 
After \ro2 discovers $I$ it will move along chord $IB$ to discover the other PoI, see also Figure~\ref{fig: wireless-algo A out}ii. In particular, since $I$ is visited before $C$ we have that $x\leq \pi-\dist$. 
If the treasure is in $B$, then the two robots are competing as to which will reach the treasure first. Even if \ro2 reaches the treasure first, the cost would be
$
\arcc{AI}+2\barr{IB}=x+2\sinn{{\dist}/2}\leq \boundw,
$
while if \ro1 reaches the treasure first, the total time will be even less than our promised upper bound. Finally, if the treasure is $I$, then the cost would be by $2\sinn{{\dist}/2}$ less than our promised upper bound. 
\end{description}
\end{proof}

It follows from Lemmata~\ref{lem: wireless Performance when A is in}, \ref{lem: wireless Performance when A is out} that, for all $\dist\in [0,\pi]$, the overall performance of Algorithm \ref{def: wireless algo} is no more than $1+\boundw$ concluding Theorem~\ref{thm: wireless upper bound}.

\section{Face-to-face Model}
\label{sec: f2f-model}

The main contribution of our work pertains to the face-to-face model and is summarized in the following theorem. 

\begin{theorem}\label{thm: non-wireless upper bound}
For every $\dist\in [0,\pi]$, problem \ten\ can be solved in time $1+\pi-\dist/2+3\sinn{\dist/2}$. 
\end{theorem}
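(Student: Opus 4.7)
The plan is to exhibit an explicit F2F evacuation algorithm and verify by case analysis that its worst-case completion time matches $1+\pi-\dist/2+3\sinn{\dist/2}$. At the outset the algorithm will mirror the wireless one: both robots spend time $1$ moving from the center to the perimeter (so the remaining budget is exactly $\pi-\dist/2+3\sinn{\dist/2}$), and they then begin searching along the perimeter from carefully chosen starting positions whose relative location depends on $\dist$ (e.g., the same point vs.\ antipodal points, split at a threshold like $2\pi/3$, as in the wireless case). Throughout, I will parameterize by the time $x=\arc{AI}$ at which the first interesting point $I$ is discovered by some robot \ro2, with the two candidate locations for the second interesting point labeled $B$ and $C$ at clockwise/counter-clockwise arc distance $\dist$ from $I$, as in Figure~\ref{fig: general pic 3 dist - wireless}.

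The crux of the algorithm, and the place where the F2F setting genuinely differs from the wireless one, is what happens immediately after a robot discovers an interesting point. Since that discovery cannot be broadcast, the finder's subsequent trajectory must be chosen so that, within a critical sub-interval of the time horizon, either (i) the two robots physically rendezvous and exchange the information face-to-face, or (ii) the \emph{absence} of a predicted rendezvous at a scheduled time is itself informative: the finder can then deduce that its partner has also discovered an interesting point (at one of the two candidate locations), and from this infer the complete topology. Concretely, each robot's post-discovery trajectory will head toward a candidate meeting spot lying on or near the chord $IB$ or $IC$, chosen so that the partner's default searching trajectory is guaranteed to cross it at the predicted time if and only if the partner has not yet located anything. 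This is exactly the ``exchange information from distance'' idea described in the introduction, and it is what shrinks the cost from $1+\boundw$ down to $1+\pi-\dist/2+3\sinn{\dist/2}$, i.e., saves exactly $\dist/2-\sinn{\dist/2}$ over the naive F2F strategy.

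After the trajectories are fixed, the performance analysis reduces to a (long but routine) case analysis, structured analogously to Lemmata~\ref{lem: wireless Performance when A is in} and \ref{lem: wireless Performance when A is out}: split on whether $\dist\le 2\pi/3$ or $\dist>2\pi/3$, then on the relative position of the starting point $A$ and the interesting points, then on whether $I=E$ or $I=T$, and finally on which candidate (if any) contains the second interesting point. In each sub-case one writes the total time as a sum of an arc $\arc{AI}=x$, a chord traversal corresponding to the movement toward the inferred or rendezvous point, and at most one final chord traversal for the fetch-to-exit step, and then bounds this expression by $\pi-\dist/2+3\sinn{\dist/2}$ using chord-length identities $\barr{XY}=2\sinn{\arc{XY}/2}$ together with trigonometric estimates of the kind already packaged as Lemma~\ref{lem:ti}. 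The bound is expected to be tight when the worst case forces a ``triple move'' style detour, which is what gives rise to the $3\sinn{\dist/2}$ term.

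The main obstacle, as the authors themselves emphasize, is not the case-checking but the upstream design of the post-discovery trajectories so that (a) the implicit information-exchange protocol is consistent across all placements of the treasure and exit, (b) the deductions made from a missed rendezvous are unambiguous, and (c) the induced worst-case time per case does not exceed $\pi-\dist/2+3\sinn{\dist/2}$. In particular, the trajectories must simultaneously serve the \emph{altruistic} non-treasure-holder (who may end up never finding the exit, yet must position itself so as to help the treasure-holder deduce its location) and the treasure-holder (who must be close enough to the inferred exit when deduction occurs). Once such a trajectory design is fixed, the remainder of the proof is essentially an exhaustive verification, and the correctness of the evacuation (the treasure always reaches the exit) will follow from the fact that, in every branch, either the environment becomes fully known or one of the robots directly fetches the treasure along a known chord.
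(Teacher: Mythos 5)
Your proposal captures the right high-level idea---schedule a rendezvous after the first discovery so that both its realization and its failure convey information---but it stops exactly where the paper's proof begins. The theorem is an upper bound witnessed by an explicit algorithm, and your text never produces one: you do not specify the post-discovery trajectories, the meeting points, the timing thresholds, or the case analysis certifying the bound $1+\pi-\dist/2+3\sinn{\dist/2}$; you explicitly defer ``the main obstacle'' (a consistent trajectory design) without resolving it. In the paper that design \emph{is} the technical content: the parameter $\barr{x}(\dist)=3\dist/2-\pi-\sinn{\dist/2}+2\sinn{\dist}$ with threshold $\dist_0\approx 1.22353$ decides, as a function of the discovery time $x$ and of whether $I=T$ or $I=E$, which of three subroutines is run; for $\dist_0\le\dist\le 2\pi/3$ the treasure-holder goes to the \emph{centre} and waits until time $t_0=\max\{x,\dist-x+2\sinn{\dist/2}\}+1$, while for $\dist>2\pi/3$ it moves along the chord $ID$ (towards the point $D$ at arc distance $2\dist$ from $I$, which is \emph{not} a candidate location of the second interesting point) for a precisely computed length $y=\dist/2-x+\sinn{\dist/2}+\sinn{\dist}$; correctness and the time bound are then established by the exhaustive analyses of Lemmas~\ref{lem: nonwireless correctness}, \ref{lem: Performance when x>=d} and \ref{lem: Performance when x<dist}, resting on the trigonometric estimates of Lemma~\ref{lem:ti}. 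Without some such explicit construction and verification, the claimed bound is not proved.

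Moreover, the concrete hints you do give diverge from a workable design. In the paper's protocol the robots always start together at a single point $A$ (there is no antipodal start in the F2F algorithm); the scheduled meeting is realized precisely when the partner \emph{has} found the exit and failed to find the treasure at its next candidate, so it is the \emph{missed} rendezvous that signals the exit is at $B$---the opposite of your claim that the partner crosses the meeting spot ``if and only if the partner has not yet located anything''; and the meeting locations are the centre $O$ or a point $K$ on the chord $ID$, not a spot ``on or near the chord $IB$ or $IC$.'' You also have the comparison backwards: $1+\boundw$ is the \emph{wireless} bound and is smaller than the F2F bound of the theorem, so nothing is shrunk from $1+\boundw$; rather, the F2F algorithm pays an extra $\dist/2-\sinn{\dist/2}$ relative to the wireless one. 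None of this dooms the general strategy, but as written the proposal is a plan whose decisive construction and case verification are missing, so it does not establish the theorem.
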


Next we give the high-level intuition of the proposed evacuation-protocol, \textit{i.e.} Algorithm~\ref{def: nonwireless algo}, that proves the above theorem (more low level intuition, along with the formal description of the protocol appears in Section~\ref{sec: desciption f2f}). 

Denote by $\beta$ the upper bound provided by the theorem above. It should be intuitive that when the distance of the PoI $\dist$ tends to 0, there is no significant disadvantage due to lack of communication. And although the wireless evacuation-time might not be achievable, a protocol similar to the wireless case should be able to give efficient solutions. Indeed, our face-to-face protocol is a greedy algorithm when $\dist$ is not too big, \textit{i.e.} the two robots try independently to explore, locate the PoI and fetch the treasure to the exit without coordination (which is hindered anyways due to lack of communication). Following the worst case analysis, it is easy to observe that as long as $\dist$ does not exceed a special threshold, call it $\dist_0$ (which we define formally later and which is approximately $1.22353$), the evacuation time is $\beta$, and the analysis is tight. 

When $\dist$ exceeds the special threshold $\dist_0$, the lack of communication has a more significant impact on the evacuation time. To work around it, robots need to exchange information which is possible only if they meet. For this reason (and under some technical conditions), robots agree in advance to meet back in the center of the disk to exchange information about their findings, and then proceed with fetching the treasure to the exit. Practically, if the rendezvous is never realized, \textit{e.g.} only one robot reaches the center up to some time threshold, that should deduce that PoI are not located in certain parts of the disk, potentially revealing their actual location. In fact, this protocol works well, and achieves evacuation time $\beta$, as long as $\dist$ does not exceed a second threshold, which happens to be $2\pi/3$. 

The hardest case is when the two PoI are further than $2\pi/3$ apart. Intuitively, in such a case there is always uncertainty as to where the PoI are located, even when one of them is discovered. At the same time, the PoI, hence the robots, might be already far apart when some or both PoI are discovered. As such, meeting at the center of the disk to exchange information would be time consuming and induces evacuation time exceeding $\beta$. Our technical contribution pertains exactly to this case. Under some technical conditions, the treasure-finder might need to decide which of the two possible exit-locations to consider next. In this case, the treasure-holder follows a trajectory not towards one of the possible locations of the exit, rather a trajectory closer to that of its peer robot aiming for a rendezvous. The two trajectories are designed carefully so that the location of the exit is revealed no matter whether the rendezvous is realized or not.

\subsection{Algorithm \& Correctness}
\label{sec: desciption f2f}

In our main Algorithm \ref{def: nonwireless algo}, robots $R_1,R_2$ that start from the centre  of the circle, move together to an arbitrary point $A$ on the circle (which takes time 1). Then they start moving in opposing directions, say, counter-clockwise and clockwise respectively till they locate some PoI. 

In what follows we describe only the trajectory of \ro2 which is meant to be moving clock-wise (\ro1 performs the completely symmetric trajectory, and will start moving counter clock-wise). In particular all point references in the description of our algorithm, and its analysis, will be from the perspective of \ro2's trajectory which is assumed to be the robot that first visits either the exit or the treasure at position $I$. By $B,C,D$ we denote the points on the circle with $\arcc{DC}=\arcc{CI}=\arcc{IB}=\dist$ (see Figure~\ref{fig: general pic 3 d}). 
As before, and in what follows, $I\in \{E,T\}$ represents the position on the circle that is first discovered in the time horizon by any robot (in particular by \ro2), and that holds either the treasure or the exit. Finally, $O$ represents the centre  of the circle, which is also the starting point of the robots. 

\begin{figure}[!ht]
                \centering
                \includegraphics[scale=0.5]{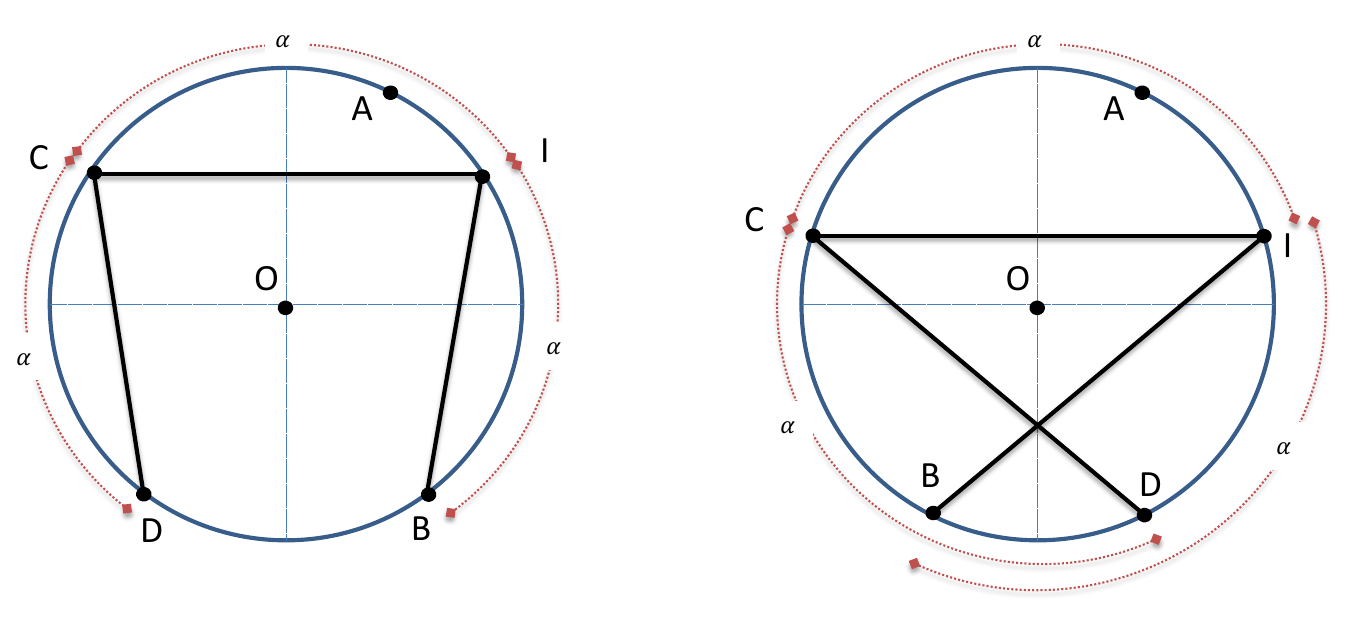}
                \caption{The points of interest from the perspective of \ro2, when $\dist\leq 2\pi/3$ on the left, and when $\dist\geq 2\pi/3$ on the right.}
                \label{fig: general pic 3 d}
\end{figure}


According to our algorithm, \ro2 starts moving from point $A$ till it reaches a PoI $I$ at time $x:=\arcc{AI}$. At this moment, our algorithm will decide to run one of the following subroutines with input $x$. These subroutines describe evacuation protocols, in which the treasure must be brought to the exit. Occasionally, the subroutines claim that robots evacuate (with the treasure) from points that is not clear that hold an exit. As we will prove correctness later, we comment on these cases by writing that ``correctness is pending''. 

\begin{description}
\item[\alg1(x)]
(Figure~\ref{fig: non-wireless-algo} i): If $I=T$, pick up the treasure and move to $B$ along the chord $IB$. If $B=E$ evacuate, else go to $C$ along the chord $BC$ and evacuate. \\
(Figure~\ref{fig: non-wireless-algo} ii): If $I=E$ move to $B$ along the chord $IB$. If $B=T$, pick up the treasure, and return to $I$ along the chord $BI$ and  evacuate. If $B=null$, then go to $C$ along the chord $BC$. If the treasure is found at $C$, pick it up, and move to $I$ along the chord $CI$ and evacuate (else abandon the process).

\item [\alg2(x)]
(Figure~\ref{fig: non-wireless-algo} iii): At the moment robots leave point $A$, set the timer to 0. \\
If $I=T$, pick up the treasure and go to the centre $O$ of the circle. Wait there till the time $t_0:=\max\{x, {\dist}-x+2\sinn{{\dist}/2}\}+1$. If \ro1 arrives at $O$ by time $t_0$, then go to $C$ and evacuate (correctness is pending). Else (if \ro1 does not arrive at $O$ by time $t_0$) go to $B$ and evacuate (correctness is pending). \\
(Figure~\ref{fig: non-wireless-algo} iv): If $I=E$,  move to $B$ along the chord $IB$. If $B=T$, pick up the treasure, and return to $I$ along the chord $BI$ and evacuate. If $B=null$, then go to the centre $O$ and halt.

\item[\alg3(x)]
(Figure~\ref{fig: non-wireless-algo} v): If $I=T$ pick up the treasure. If \ro1 is already at point $I$ go to $C$ and evacuate (correctness pending). If \ro1 is not at point $I$, then move along chord $ID$ for additional time 
$y:={\dist}/2-x+\sinn{{\dist}/2}+\sinn{\dist}$, 
and let $K$ be such that $\barr{IK}=y$. If \ro1 is at point $K$, then go to $B$ and evacuate (correctness is pending), else (if \ro1 is not at point $K$) go to $C$ and evacuate (correctness pending). \\
(Figure~\ref{fig: non-wireless-algo} vi): If $I=E$,  move to $B$ along the chord $IB$. If $B=T$, pick up the treasure, and return to $I$ along the chord $BI$ and evacuate. If $B=null$, then move along chord $BC$ 
until you hit $C$ (or you meet the other robot- whatever happens first) 
and halt at the current point, call it $K$. 
\end{description}

\begin{figure*}[!ht]
                \centering
                \includegraphics[scale=0.54]{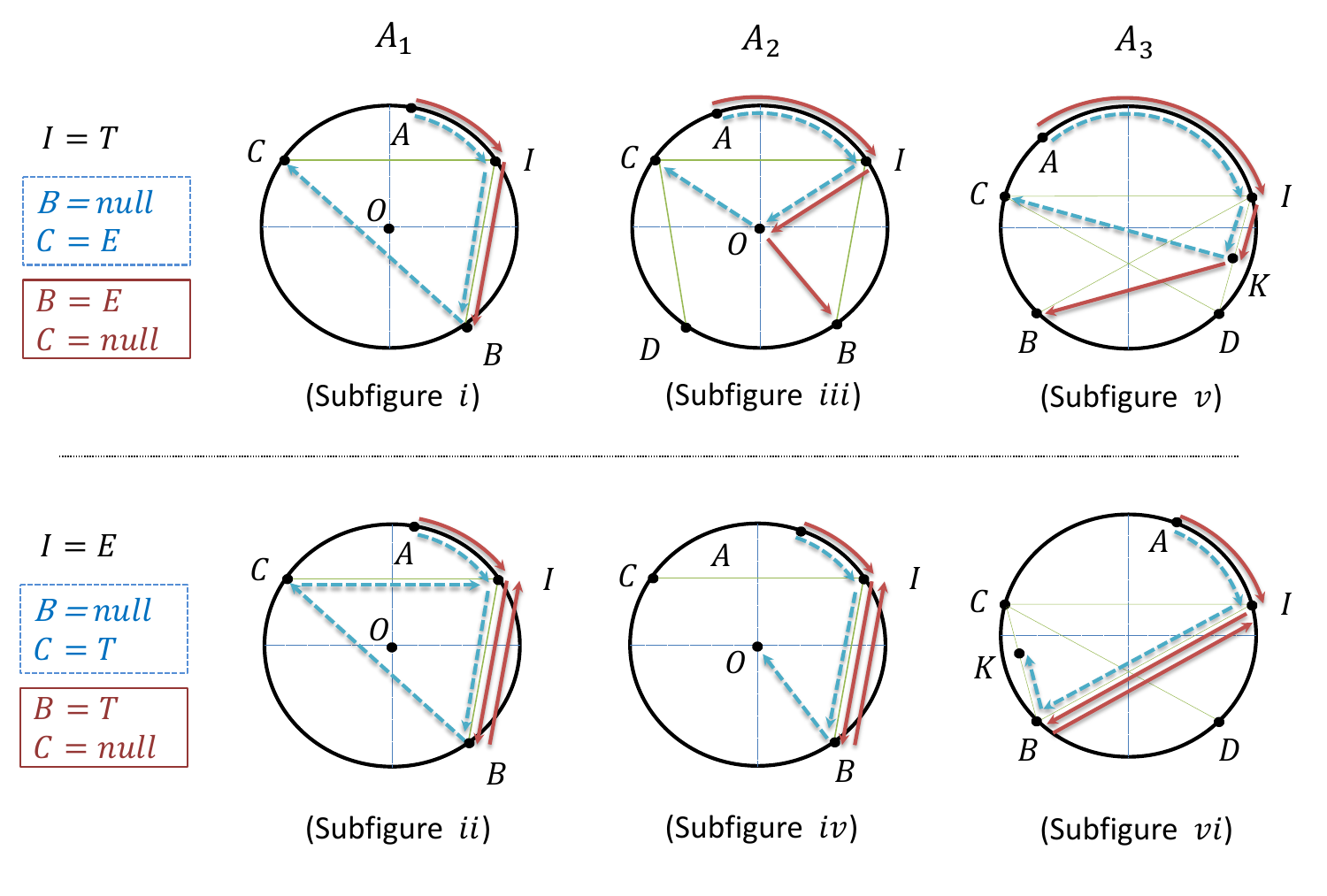}
                \caption{The non-wireless algorithm for two robots with performance $1+ \pi-{\dist}/2+3\sinn{{\dist}/2}$.}
                \label{fig: non-wireless-algo}
\end{figure*}

It is worthwhile discussing the intuition behind the subroutines above. First note that if a robot ever finds a treasure, it picks it up. The second important property is that
each robot simulates \alg1 either till it finds the treasure or till it fails to find the treasure after finding the exit. At a high level, \alg1 greedily tries to evacuate the treasure. This means that if the treasure is found first, then the robot tries successively the possible locations of the exit (using the shortest possible paths) and evacuates. If instead the exit is found, then it successively tries the (at most) two possibilities of the treasure location, and if the treasure is found, it returns it to the exit. 

\alg2 and \alg3 constitute our main technical contribution. Both algorithms are designed so that in some special cases, in which the exact locations of the PoI are not known, the two robots schedule some meeting points so that if the meeting (rendezvous) is realized or even if it is not, the treasure-holder can deduce the actual location of the exit. In other words, we make possible for the two robots to exchange information without meeting. 
Indeed after finding the treasure, in \alg2, \ro2 goes to the centre of the ring and waits some finite time till it makes some decision of where to move the treasure, while in \alg3, \ro2 moves along a carefully chosen (and non-intuitive) chord, and again for some finite time, till it makes a decision to move to a point on the ring. If instead the exit is found 
early, then the trajectories in \alg2, \alg3 are designed to support the other robot which might have found the treasure in case the latter does not follow \alg1. 

The next non-trivial and technical step would be to decide when to trigger the subroutines above. Of course, once this is determined, \textit{i.e.} once the trajectories are fixed, correctness and performance analysis is a matter of exhaustive analysis. 

We are ready to define our main non-wireless algorithm. We remind the reader that the description is for \ro2 that starts moving clockwise. \ro1 performs the symmetric trajectory by moving counter-clockwise. 

Our main algorithm uses parameter 
$\barr x(\dist) :=3{\dist}/2-\pi-\sinn{{\dist}/2}+2 \sinn{\dist},$
which we abbreviate by $\barr x$ whenever $\alpha$ is clear from the context. 
By Lemma~\ref{lem:ti}\ref{6}, $\dist_0\approx 1.22353$ is the unique root of $\barr x(\dist) =0$, while $\barr x$ is positive for all $\dist\in (\dist_0, \pi)$,  and negative for all $\dist \in [0,\dist_0)$. 
\begin{algorithm}
\caption{Non-Wireless Algorithm}
\label{def: nonwireless algo}
\begin{description}
\itemsep=-\parsep
\item
{\bf Step 1.} Starting from $A$, move clockwise until a PoI $I$ is found at time $x:=\arcc{AI}$.
\item
{\bf Step 2.} Proceed according to the following cases:
\begin{itemize}
\itemsep=-\parsep
\item If ${\dist}>2\pi/3$ and $I=T$ and ${\dist} > x\geq {\dist}-\barr x$, then run \alg3(x).
\item If ${\dist}>2\pi/3$ and $I=E$ and $x\leq \barr x$, then run \alg3(x).
\item If $\dist_0 \leq {\dist}\leq 2\pi/3$ and $I=T$ and ${\dist}> x\geq {\dist}-\barr x$, then run \alg2(x).
\item If $\dist_0 \leq {\dist}\leq 2\pi/3$ and $I=E$ and $x\leq \barr x$, then run \alg2(x).
\item In all other cases, run \alg1(x).
\end{itemize} 
\end{description}
\end{algorithm}
\begin{lemma}\label{lem: nonwireless correctness}
For every $\dist \in [0,\pi]$, Algorithm \ref{def: nonwireless algo} is correct, \textit{i.e.} a robot brings the treasure to the exit.
\end{lemma}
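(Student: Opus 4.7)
The plan is to prove correctness by examining which subroutine of Algorithm~\ref{def: nonwireless algo} each robot executes in every possible configuration of $(E,T)$, and verifying case by case that one robot ends up fetching the treasure to the exit. Since the algorithm is symmetric under the swap of the two robots together with orientation reversal of the disk, we may assume without loss of generality that \ro2 is the first robot to encounter an interesting point $I$ at time $x$, and argue from there.

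The cases in which \alg1 is invoked by either robot are the easiest: the robot greedily visits along direct chords each of the at most two candidate locations ($B$, then $C$) for the remaining interesting point. Because $\{B,C\}$ exhausts the possibilities consistent with the identity and position of $I$, \alg1 is guaranteed to fetch the treasure to the exit without coordination with the other robot. Hence the substantive verification concerns the ``correctness pending'' decisions inside \alg2 and \alg3, where a robot infers the unknown configuration from the observed presence or absence of the other robot at a prescribed rendezvous point.

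For \alg2, suppose \ro2 finds $I=T$ and triggers \alg2 (the case $I=E$ is symmetric). The exit is at $B$ or at $C$. If $E=C$, then \ro1, moving counter-clockwise, reaches $C$ at time $\alpha-x\leq\barr x$, which by the trigger condition causes \ro1 to also invoke \alg2; it traverses the chord to its own ``$B$''-candidate (which is \ro2's $D$), finds no treasure there because \ro2 has already picked it up, and heads to the centre $O$, arriving at time $(\alpha-x)+2\sinn{\alpha/2}+1\leq t_0$ by the very definition of $t_0$. If instead $E=B$, then the exit lies on the side of $I$ opposite to \ro1's sweep, so within the window $[0,t_0]$ \ro1 encounters no interesting point and never heads to $O$. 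Thus presence of \ro1 at $O$ by time $t_0$ is equivalent to $E=C$, and \ro2's subsequent evacuation decision is correct.

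The argument for \alg3 (invoked when $\alpha>2\pi/3$) runs along the same lines, except that the rendezvous point $K$ now lies inside the disk on the chord $ID$. The decisive calculation I would perform is: when $E=C$, so that \ro1 triggers its own \alg3 with $I_1=E$ at time $\alpha-x$, \ro1 walks the chord from its ``$B$''-candidate $D$ back toward its ``$C$''-candidate $I$, while \ro2, carrying the treasure, walks the same chord from $I$ toward $D$. Equating positions on this chord gives a meeting at time $x+y$ at distance exactly $y=\alpha/2-x+\sinn{\alpha/2}+\sinn{\alpha}$ from $I$, i.e.\ precisely the point $K$. In the complementary configuration, within $[0,x+y]$ either \ro1 is still on the periphery (not having located any interesting point) or it has located the exit late and is executing \alg1 along a chord that meets $ID$ only at the shared endpoint $I$, so \ro1 is not at $K$ when \ro2 arrives. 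Hence presence or absence of \ro1 at $K$ at time $x+y$ correctly distinguishes the two configurations and \ro2's evacuation target is determined unambiguously. The main obstacle in executing this plan is the routine but careful bookkeeping that, across the full parameter range of the trigger, \ro1's position at time $x+y$ in the non-rendezvous configuration is genuinely distinct from $K$; this reduces to a collection of elementary trigonometric inequalities of the same flavour as those used in the wireless analysis.
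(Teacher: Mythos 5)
Your plan does follow the same route as the paper's proof: \alg1 cases are treated as immediate, and for \alg2/\alg3 one argues that the rendezvous (at $O$ by time $t_0$, or at the point $K$ on the chord $ID$) is realized exactly when the exit is at $C$, with the same meeting-time computation giving $y=\dist/2-x+\sinn{\dist/2}+\sinn{\dist}$. However, several steps as written are wrong or missing. (i) The claim that a robot running \alg1 is ``guaranteed to fetch the treasure to the exit without coordination'' fails in the $I=E$ branch: the other robot may already hold the treasure, in which case the \alg1-robot abandons and correctness rests entirely on the treasure-holder. The paper avoids this by first proving that the treasure is always picked up and then arguing from the perspective of the treasure-holder, which need not be the first finder; indeed your ``WLOG \ro2 is the first to encounter an interesting point'' is violated in the central sub-case $E=C$ with $x>\dist/2$, where \ro1 finds the exit at time $\dist-x<x$. (ii) Your justification of the non-rendezvous direction for \alg2 --- that when $E=B$ robot \ro1 ``encounters no interesting point'' within $[0,t_0]$ --- is false: take $\dist=2\pi/3$ and $x=2$, so the trigger holds, $t_0=x+1=3$, and \ro1 reaches the exit at $B$ at time $2\pi-\dist-x\approx 2.19<t_0$. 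Correctness survives only because \ro1's discovery time then exceeds $\barr x$, so it runs \alg1 and its chord trajectory never brings it to $O$ by time $t_0$ (which is the statement the paper actually asserts), but your stated geometric reason (``the exit lies on the side opposite to \ro1's sweep'') is not a valid argument.

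(iii) In the \alg3 case you omit two verifications that the paper makes and that are needed for the signal at $K$ to be well defined: that $K$ indeed lies on the segment $ID$, i.e.\ that $y\leq \barr{ID}=2\sinn{\dist}$ even at the extreme $x=\dist-\barr x$, where $y=\dist-\pi+3\sinn{\dist}$ (this uses $\dist+\sinn{\dist}\leq\pi$); and the sub-case $y\leq 0$, which occurs for $\dist$ close to $\pi$ and $x$ close to $\dist$, where, if $E=C$, robot \ro1 has already traversed $C\to D\to I$ and is waiting at $I$ when \ro2 picks up the treasure --- this is exactly why the algorithm has the ``if \ro1 is already at point $I$'' branch, which your argument never exercises. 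Finally, you explicitly defer the check that in the complementary configuration \ro1 is never at $K$ at time $x+y$, calling it routine bookkeeping; since the entire lemma rests on presence/absence at the rendezvous point being an unambiguous signal, that verification (together with (ii) above) is the substance of the proof rather than an afterthought, so as written the proposal leaves the essential part open.
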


\begin{proof}
First, observe that the treasure is always picked up. Indeed, if the first PoI $I$ that is discovered (by any robot) is the treasure, then the claim is trivially true. If the first PoI $I$ found, say, by \ro2 is an exit, then \ro2 (in all subroutines) first tries the possible location $B$ for the treasure, and if it fails it tries location $C$ (in other words it always simulates \alg1 till it fails to find the treasure after finding the exit). Meanwhile \ro1 moves counter-clockwise on the ring, and sooner or later will reach $C$ or $B$. So at least one of the robots will reach the treasure first. In what follows, let \ro2 be the one who found first the treasure (and picks it up). We examine three cases. 

If \ro2 is following subroutine \alg1, then the treasure is brought to the exit. Indeed, in that case \ro2 expects no interaction from \ro1 and greedily tries to evacuate (see subcases i,ii in Figure~\ref{fig: non-wireless-algo}). 

If \ro2 is following subroutine \alg2, then it must be that $\dist_0\leq {\dist} \leq 2\pi/3$, and ${\dist}-\barr x \leq x< {\dist}$, and that it has not found any other PoI before (by Lemma~\ref{lem:ti}\ref{6} we have ${\dist}-\barr x< {\dist}$ and $\barr x> 0$, for all ${\dist}>\dist_0$). Figure~\ref{fig: non-wireless-algo} subcase iii depicts this scenario, where $I=T$. Note that from \ro2's perspective, the exit can be either in $B$ or in $C$, and \ro2 chooses to go to the center. This takes total time $x+1$. If the exit was at point $C$, then \ro1 would have found it in time ${\dist}-x\leq \barr x$ and that would make it to follow \alg2. So, \ro1 would first check point $D$ (where the treasure is not present), and that would make it to go to the centre arriving at time ${\dist}-x+2\sinn{{\dist}/2}+1$ (an illustration of this trajectory is shown in Figure~\ref{fig: non-wireless-algo} subcase iv, if \ro1 was moving clockwise). \ro2 is guaranteed to wait at the center till time $t_0$ (which is the maximum required time that takes each robot to reach the centre). In that case, \ro2 meets \ro1 at the center (because \ro1 did find the exit in $C$), and \ro2 correctly chooses $C$ as the evacuation point. Finally, if instead the exit was not in $C$, then \ro1 would not make it to the centre by time $t_0$. That can happen only if the exit is at point $B$, and once again \ro2 makes the right decision to evacuate from $B$.

In the last case, \ro2 is following subroutine \alg3, and so it must be that ${\dist} > 2\pi/3$, that ${\dist}-\barr x \leq x< {\dist}$, and that it has not found any other PoI before. Figure~\ref{fig: non-wireless-algo} subcase v depicts this scenario. Note that the exit could be either in $C$ or in $D$. 

If the exit is in $C$, then ${\dist}-x\leq \barr x$, and \ro1 would follow \alg3 too. This means, \ro1 would go to point $D$ (where there is no treasure), and that would make it travel along the chord $DI$ (an illustration of this trajectory is shown in Figure~\ref{fig: non-wireless-algo} subcase vi, if \ro1 was moving clockwise). If \ro1 reaches $I$, it waits there, and when \ro2 arrives in $I$, \ro2 makes the right decision to evacuate from $C$. Otherwise \ro1 does not reach $I$, and it moves up to a certain point on the chord $ID$ similarly to \ro2. Note that the meeting condition on a point $K$ on the chord, with $y=\barr{IK}$, would be that $\arcc{AI}+\barr{IK}=\arcc{CA}+\barr{CD}+(\barr{DI}-\barr{IK})$, which translates into $y=x+\sinn{{\dist}/2}+\sinn{\dist}-{\dist}/2$, \textit{i.e.} the exact segment of $ID$ that \ro2 traverses before it changes trajectory. The longest \ro2 could have traveled on  the chord $ID$ would be when $x={\dist}-\barr x$, but then $\barr{IK}$ would be equal to ${\dist} - \pi + 3 \sinn{\dist} \leq 2\sinn{\dist}=\barr{ID}$, for all ${\dist}>2\pi/3$. Therefore, the two robots meet indeed in somewhere on the chord $ID$. Note also that in this case, \ro2 makes the right decision and goes to point $C$ in order to evacuate. 

If instead the exit is in $B$, then again \ro2 travels till point $K$ (which is in the interior of the chord $ID$). But in this case, \ro1 will not meet \ro2 in point $K$ as it will not follow \alg3. Once again, \ro2 makes the right decision, and after arriving at $K$ it moves to point $B$ and evacuates. 
\end{proof}

\subsection{Algorithm Analysis}

In this section we prove that, for all $\dist\in [0,\pi]$, the evacuation time of Algorithm \ref{def: nonwireless algo} is no more than $1+\pi-{\dist}/2+3\sinn{{\dist}/2}$, concluding Theorem~\ref{thm: non-wireless upper bound}. In the analysis below we provide, whenever possible, supporting illustrations, which for convenience may depict special configurations. In the mathematical analysis we are careful not to make any assumptions for the configurations we are to analyze. 

It is immediate that when a robot finds the first PoI at time $x\geq \alpha$ after moving on the perimeter of the disk, then that robot can also deduce where the other PoI is located. In that sense, it is not surprising that, in this case, the trajectory of the robots and the associated cost analysis are simpler. 
\begin{lemma}\label{lem: Performance when x>=d}
Let $x$ be the time some robot is the first to reach a PoI $I \in\{E,T\}$ from the moment robots start moving in opposing directions. 
If $x \geq \dist$, then the performance of the algorithm is at most $1+\pi-{\dist}/2+3\sinn{{\dist}/2}$. Also, $x\geq \dist$ is impossible, if ${\dist}>2\pi/3$. 
\end{lemma}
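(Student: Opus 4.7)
The plan is to exploit the geometric constraint that, when $x \geq \dist$, the first-finder's explored arc already certifies the location of the other interesting point. By symmetry assume \ro2 is the first-finder. Since \ro2 has explored the arc of length $x$ clockwise from $A$ while \ro1 has symmetrically explored an arc of equal length on the opposite side, the other interesting point must sit in the unexplored arc. Of the two arc-distance-$\dist$ candidates relative to $I$, the assumption $x \geq \dist$ places the counter-clockwise one inside \ro2's explored arc, so the other point must lie at $B$ (arc-distance $\dist$ clockwise of $I$), which forces $x + \dist \leq 2\pi - x$, i.e.\ $x \leq \pi - \dist/2$. Combined with $x \geq \dist$ this yields $\dist \leq 2\pi/3$, giving the second claim.

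For the first claim, I would first check that Algorithm~\ref{def: nonwireless algo} invokes \alg1 at both robots under $x \geq \dist$: the \alg2 and \alg3 triggers for $I = T$ demand $\dist > x$, which fails; and those for $I = E$ demand $x \leq \barr x$, which also fails since a brief estimate shows $\barr x < \dist$ throughout $[\dist_0,2\pi/3]$. When $I = T$, \alg1 directs \ro2 to pick up the treasure and walk chord $IB$ to the exit, for a total time $1 + x + 2\sinn{\dist/2} \leq 1 + \pi - \dist/2 + 3\sinn{\dist/2}$ using $x \leq \pi - \dist/2$.

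The case $I = E$ is the crux and involves comparing two competing deliveries. \ro2, following \alg1, walks chord $IB$ towards the treasure, reaches $B$ at time $x + 2\sinn{\dist/2}$, picks it up, and returns along chord $BI$ to deliver at total time $1 + x + 4\sinn{\dist/2}$. Meanwhile \ro1, moving counter-clockwise along the perimeter, reaches $B$ at time $2\pi - x - \dist$. From \ro1's perspective it has just found $I_{R_1}=T$ at $B$; by the symmetric assignment of ``$B$'' and ``$C$'' for \ro1, its own ``$B$''-point (the arc-distance-$\dist$ point in its own direction of motion) is precisely the exit $I$, so \alg1 prescribes walking the single chord $BI$, and \ro1 delivers at total time $1 + 2\pi - x - \dist + 2\sinn{\dist/2}$. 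The two delivery times coincide at $x_0 := \pi - \dist/2 - \sinn{\dist/2}$, where both equal $1 + \pi - \dist/2 + 3\sinn{\dist/2}$; since the \ro2-time is increasing and the \ro1-time decreasing in $x$, the overall completion time---dictated by whichever robot reaches $B$ first---never exceeds the target bound.

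The main obstacle I anticipate is handling the face-to-face interaction along chord $IB$ when $x > x_0$: after \ro1 picks up the treasure at $B$ and turns back towards $I$, it meets the still-advancing \ro2 in the chord's interior. I would need to argue that this rendezvous does not trigger a different algorithmic branch that might worsen the completion time; a direct computation shows that whoever carries the treasure onward from the meeting point reaches $I$ at exactly the same moment $1 + 2\pi - x - \dist + 2\sinn{\dist/2}$, so the bound is preserved.
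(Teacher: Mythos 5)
Your proposal is correct and takes essentially the same route as the paper's proof: the first-finder condition forces $x\leq 2\pi-\dist-x$, hence $x\leq \pi-\dist/2$ and $\dist\leq 2\pi/3$, and the evacuation time is bounded by $x+2\sinn{\dist/2}$ when $I=T$ and by $\min\left\{x+2\sinn{\dist/2},\,2\pi-\dist-x\right\}+2\sinn{\dist/2}$ when $I=E$, which is maximized at the crossing point with value $\pi-\dist/2+3\sinn{\dist/2}$. Your additional verifications (that the subroutine triggered is \alg1, and that a mid-chord encounter changes nothing) are correct refinements that the paper leaves implicit.
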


\begin{proof}
Note that 1 is the time it takes both robots to reach a point, say $A$, on the ring. So we will tailor our analysis to the evacuation time from the moment robots start moving (in opposing directions) from point $A$. 

Let $x$ be the time after which \ro2 (without loss of generality) is the first to find a PoI $I \in \{E,T\}$. Let also $B$ be the other PoI $\{E,T\} \setminus I$. 
For \ro2 to reach first $I$, it must be the case that \ro1 does not have enough time to reach $B$, and hence $x\leq 2\pi-{\dist}-x$, that is $x\leq \pi-{\dist}/2$. Since also $x\geq \dist$, we conclude that ${\dist}\leq 2\pi/3$. 

Next we examine the following cases. For our analysis, the reader can use Figure~\ref{fig: general pic 3 d} as reference (although $A$ is depicted in the interior of the arc $CI$, we will not use that $\arcc{AI}\leq \dist$). 

\begin{description}
\item{Case 1 ($I=T$):} \ro2 picks up the treasure and moves along the chord $\barr{IB}=2\sinn{{\dist}/2}$. The worst case treasure-evacuation time then is 
$
\max_{{\dist}\leq x \leq \pi-{\dist}/2} \left\{ x+2\sinn{{\dist}/2} \right\}
=
\pi-{\dist}/2+2\sinn{{\dist}/2}.
$

\item{Case 2 ($I=E$):} According to the algorithm, \ro2 moves towards the treasure point $B$ along the chord $IB$, and reaches it in time $x+2\sinn{{\dist}/2}$. \ro1 moves counter-clock wise and will reach the position of the treasure in time $2\pi - {\dist} - x$. Whoever finds the treasure first will evacuate from the exit, paying additional time $2\sinn{{\dist}/2}$. Hence, the total cost can never exceed 
\begin{align*}
& \min \left\{ x+2\sinn{{\dist}/2}, 2\pi - {\dist} - x \right\} + 2\sinn{{\dist}/2}  \\
{\leq} & \boundnw. ~~~~~~\mbox{(by Lemma~\ref{lem:ti}\ref{7})}
\end{align*}
\end{description}
Observe that in both cases, the cost of the algorithm is as promised. 
\end{proof}

By Lemma~\ref{lem: Performance when x>=d} we can focus on the (much more interesting) case that \ro2, which is the first robot that finds a PoI, arrives at $I$ at time $x:=\arcc{AI}< {\dist}$. A reference for the analysis below is Figure~\ref{fig: non-wireless-algo} which is accurately depicting point $A$ at most $\alpha$ arc-distance away from $I$. For the sake of better exposition, we examine next the cases ${\dist}\leq 2\pi/3$ and $\alpha \geq 2\pi/3$ separately. Note that in the former case robots may run subroutines \alg1 or \alg2, while in the latter case robots may run subroutines \alg1 or \alg3. For the lemma below, the reader may consult Figures~\ref{fig: general pic 3 d} and \ref{fig: non-wireless-algo}.

\begin{lemma}\label{lem: Performance when x<dist}
Let $x$ be the time some robot is the first to reach a PoI $I \in\{E,T\}$ from the moment robots start moving in opposing directions. 
If $x < {\dist}$, then the performance of the algorithm is at most $1+\pi-{\dist}/2+3\sinn{{\dist}/2}$, for all $\dist\in [0,\pi]$. 
\end{lemma}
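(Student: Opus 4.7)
The plan is to split the argument into the two regimes that Algorithm~\ref{def: nonwireless algo} explicitly distinguishes: $\dist \leq 2\pi/3$ (where the non-greedy branch runs \alg2) and $\dist > 2\pi/3$ (where it runs \alg3). In each regime I would further split according to which subroutine is triggered and whether $I=T$ or $I=E$, and bound the evacuation time in each atomic case by the target $\pi-\dist/2+3\sinn{\dist/2}$ (after discounting the initial time $1$ needed to reach the perimeter). Throughout, I assume WLOG that \ro2 is the first to reach an interesting point $I$ at time $x<\dist$ (measured from $A$), and use Figure~\ref{fig: general pic 3 d} and Figure~\ref{fig: non-wireless-algo} to read off trajectory lengths.

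First I would handle the greedy subroutine \alg1. The worst case is a triple move: when $I=T$, \ro2 pays $x+\barr{IB}+\barr{BC}=x+2\sinn{\dist/2}+2\sinn{\dist}$ at worst, and when $I=E$ the analogous evacuation pays $x+2\sinn{\dist/2}+2\sinn{\dist/2}$ (or the peer robot finishes sooner). Under the triggering rule for \alg1, the relevant constraint is either $\dist<\dist_0$ (so $\barr x<0$ and no restriction on $x$), or $x<\dist-\barr x$ (when $I=T$), or $x>\barr x$ (when $I=E$). Substituting the definition $\barr x=3\dist/2-\pi-\sinn{\dist/2}+2\sinn{\dist}$ directly collapses $x+2\sinn{\dist/2}+2\sinn{\dist}$ to $\pi-\dist/2+3\sinn{\dist/2}$, and the case $I=E$ admits a dual computation plus a min of the two robots' arrival times at $B$, invoking Lemma~\ref{lem:ti} to close the inequality as in Lemma~\ref{lem: Performance when x>=d}.

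Next I would analyse \alg2 (in the range $\dist_0\le\dist\le 2\pi/3$). When $I=T$, the total time from $A$ is $t_0+1=\max\{x,\dist-x+2\sinn{\dist/2}\}+2$; bounding the max by using $x\ge \dist-\barr x$ reduces to the inequality $\dist-x+2\sinn{\dist/2}\le \barr x+2\sinn{\dist/2}$ and then to the target via the formula for $\barr x$. When $I=E$, the peer robot finds $T$ first and runs \alg2 itself, so I would trace its trajectory and reuse the $I=T$ bound. For \alg3 (the range $\dist>2\pi/3$) the key is the chord-meeting mechanism: from the perspective of the treasure-holder, the cost is $x+\barr{IK}+\barr{KB}$ or $x+\barr{IK}+\barr{KC}$ depending on whether the rendezvous at the intermediate point $K$ (with $\barr{IK}=y=\dist/2-x+\sinn{\dist/2}+\sinn{\dist}$) fires; plugging in $y$ makes the first summand telescope with $x$ into a quantity expressible through $\sinn{\dist/2}$ and $\sinn{\dist}$, so both branches reduce to the bound. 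The main obstacle I anticipate is this last branch: one has to verify geometrically that $K$ indeed lies on the chord $\barr{ID}$ throughout the admissible range of $x$ (this is the inequality $\dist-\pi+3\sinn{\dist}\leq 2\sinn{\dist}$ used in the correctness proof) and then reduce the evacuation cost to the target bound through repeated application of Lemma~\ref{lem:ti}, where the triangle inequality $\barr{KB}\le \barr{KI}+\barr{IB}$ and the estimate $\barr{KC}\le \barr{KI}+\barr{IC}$ play a crucial role. Once these trigonometric inequalities are established, the case analysis closes uniformly at $1+\pi-\dist/2+3\sinn{\dist/2}$.
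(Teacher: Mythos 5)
Your overall plan follows the same route as the paper: split by which subroutine is triggered and whether $I=T$ or $I=E$, discount the initial unit of time, telescope the \alg1 triple-move cost $x+2\sinn{\dist/2}+2\sinn{\dist}$ against the trigger threshold $x<\dist-\barr x$ (this indeed collapses exactly to $\pi-\dist/2+3\sinn{\dist/2}$), and bound the \alg2 cost $\max\{x,\dist-x+2\sinn{\dist/2}\}+2$ using $x\ge\dist-\barr x$. Two small glosses there: for \alg2 the substitution of $\barr x$ does not make the bound collapse identically -- you still need $3\dist/2-3\sinn{\dist/2}\le\pi-2$ (for the branch $\max=x\le\dist$, which your sketch skips) and $\dist+\sinn{\dist}-\sinn{\dist/2}\le\pi-1$ on $[0,2\pi/3]$, i.e. exactly Lemma~\ref{lem:ti}\ref{8}; and for \alg1 with $I=E$ the bound $x+4\sinn{\dist/2}$ needs the constraint $x\le\pi-\dist/2-\sinn{\dist/2}$ coming from ``\ro2 reaches the treasure before \ro1'', which your ``min of arrival times'' remark covers in spirit.

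The genuine gap is in the \alg3 branch ($\dist>2\pi/3$), precisely the step you flag as the main obstacle. Estimating $\barr{KB}\le\barr{KI}+\barr{IB}=y+2\sinn{\dist/2}$ (and likewise for $\barr{KC}$) is too lossy: since $y$ can be as large as $\dist-\pi+3\sinn{\dist}$ (attained at $x=\dist-\barr x$, and strictly positive on $(2\pi/3,\pi)$), your total $x+y+\barr{KB}$ is only bounded by $3\dist/2-\pi+3\sinn{\dist/2}+4\sinn{\dist}$, and comparing with the target $\pi-\dist/2+3\sinn{\dist/2}$ this requires $\dist+2\sinn{\dist}\le\pi$, which is false throughout $(2\pi/3,\pi)$ (at $\dist=2\pi/3$ the left-hand side is $2\pi/3+\sqrt{3}\approx 3.83>\pi$, with equality only at $\dist=\pi$). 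So the triangle inequality through $I$ cannot close this case. The paper instead uses that $K$ lies on the segment $ID$, so the distance from a fixed point to $K$ is at most the larger of the distances to the segment's endpoints: $\barr{KB}\le\max\{\barr{IB},\barr{DB}\}=\max\{2\sinn{\dist/2},2\sinn{3\dist/2-\pi}\}=2\sinn{\dist/2}$ (since $\arc{DB}=3\dist-2\pi$ and $3\dist/2-\pi\le\dist/2\le\pi/2$), and $\barr{KC}\le\max\{\barr{CI},\barr{CD}\}=2\sinn{\dist/2}$. Combined with your correct telescoping $x+y=\dist/2+\sinn{\dist/2}+\sinn{\dist}$, the cost is at most $\dist/2+3\sinn{\dist/2}+\sinn{\dist}$, which is at most $\pi-\dist/2+3\sinn{\dist/2}$ by Lemma~\ref{lem:ti}\ref{9} ($\dist+\sinn{\dist}\le\pi$). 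With that replacement your case analysis closes; as written, it does not.
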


\begin{proof}
As before, we omit in the analysis below the time cost 1, \textit{i.e.} the time robots need to reach the periphery of the disk. We examine the following cases for \ro2, which is the robot that finds $I$.
\begin{description}
\item[($I=T, B=E, C=null$):] 
If \ro2 runs \alg1, then it must be that $x\leq {\dist}-\barr x$, so the cost is $x+2\sinn{{\dist}/2} \leq {\dist}-\barr x + 2\sinn{{\dist}/2} 
\leq \pi-{\dist}/2+3\sinn{{\dist}/2}$ (see Figure~\ref{fig: non-wireless-algo} i). 

If \ro2 runs \alg2, then it must be that ${\dist}-\barr x \leq x < {\dist}$ and ${\dist}\leq 2\pi/3$, and the robot goes to the centre in order to learn where the exit is (see Figure~\ref{fig: non-wireless-algo} iii). Independently of where the exit is, and by Lemma~\ref{lem: nonwireless correctness}, \ro2 makes the right decision and evacuates in time $1+\max_{{\dist}-\barr x \leq x < {\dist}}\{x,{\dist}-x+2\sinn{{\dist}/2} \}+1 \leq \max\{\alpha, \barr x + 2\sinn{{\dist}/2}\} +2$ 
which, by Lemma~\ref{lem:ti}\ref{8}, is at most $\boundnw$, for all ${\dist}\leq 2\pi/3$.
Note that the analysis of this case is valid, even if $I=T$ is not the first PoI that is discovered, and it is from the perspective of the robot that finds the treasure. 

If \ro2 runs \alg3, then it must be that ${\dist}-\barr x \leq x < {\dist}$ and ${\dist}> 2\pi/3$. Then the trajectory of \ro2 is as in Figure \ref{fig: non-wireless-algo} v, and the exit is found correctly due to Lemma~\ref{lem: nonwireless correctness}. For the sake of the exposition, we will do the worst case analysis for both cases $B=E$ and $C=E$ now (\textit{i.e.} we only insist in that $I=T$ and that \ro2 runs \alg3). 

The total time for the combined cases is 
$\arcc{AI}+\barr{IK}+\max\{\barr{KB}, \barr{KC}\},$
where $\barr{IK}=y$ (see definition of \alg3). Since as we have proved, $K$ lies always in chord $ID$, and since $\arcc{DB}=3{\dist}-2\pi$ we have that 
\begin{align*}
\barr{BK} 
 \leq & \max\{\barr{BI}, \barr{BD}\} \\
\leq &\max\{2\sinn{{\dist}/2}, 2\sinn{3{\dist}/2-\pi}\} \\
\leq & 2\sinn{{\dist}/2}.
\end{align*}
 We also have that $\barr{KC} \leq \barr{CI}=2\sinn{{\dist}/2}$. So the cost becomes no more than 
\begin{align*}
& x+y+2\sinn{{\dist}/2} 
= {\dist}/2+3\sinn{{\dist}/2}+\sinn{\dist} \\ 
{\leq} & \pi-{\dist}/2+3\sinn{{\dist}/2},
~~~~~\mbox{(by Lemma~\ref{lem:ti}\ref{9})}
\end{align*}
for all $\dist\in [0,\pi]$. 

\item[($I=T, B=null, C=E$):] Since $I$ is found first, we must have $x\leq {\dist}/2$, hence both robots run \alg1, see Figure \ref{fig: non-wireless-algo} i. Robot \ro2 that finds the treasure will evacuate in time no more than 
$
x+2\sinn{{\dist}/2}+2\sinn{\dist}\leq {\dist}/2+2\sinn{{\dist}/2}+2\sinn{\dist} < \pi-{\dist}/2+3\sinn{{\dist}/2}$, for all $\dist\in [0,\pi]$. 

\item[($I=E, B=T, C=null$):] If \ro2 is the first to find the treasure, then this case is depicted in Figure \ref{fig: non-wireless-algo} i. This happens exactly when $x+2\sinn{{\dist}/2}\leq 2\pi-x-\alpha$, so that the total evacuation time is $x+4\sinn{{\dist}/2}\leq  \pi-{\dist}/2+3\sinn{{\dist}/2}$, for all $\dist\in [0,\pi]$. 

Otherwise $x>\pi-{\dist}/2-\sinn{{\dist}/2}$, and \ro1 is the robot that reaches the treasure first. If \ro1 decides to run \alg1, then the cost would be $2\pi-x-{\dist}+2\sinn{{\dist}/2} <  \pi-{\dist}/2+3\sinn{{\dist}/2}$, for all $\dist\in [0,\pi]$. Finally, if \ro1 decides to run \alg2 or \alg3, then we have already made the analysis in case $I=T, B=E, C=null$ above. 

\item[($I=E, B=null, C=T$):] Note that in all cases, both robots will run the same subroutine. In particular, if robots run either \alg2 or \alg3, then we have already done the analysis in case $I=T, B=E, C=null$ above. 

Finally, if both robots run \alg1, it must be either because ${\dist}\leq \dist_0$, or because $x\geq \barr x$, while the cost is always ${\dist}-x+2\sinn{{\dist}/2}+2\sinn{\dist}$ (the case is depicted in Figure \ref{fig: non-wireless-algo} ii, with reverse direction). 
If ${\dist}\leq \dist_0$, then the evacuation cost would be at most ${\dist}+2\sinn{{\dist}/2}+2\sinn{\dist}$ which by Lemma~\ref{lem:ti}\ref{3} is at most $\boundnw$, for all $\dist\in [0,\dist_0]$. If $x\geq \barr x$, then the cost would be at most ${\dist}-\barr x+2\sinn{{\dist}/2}+2\sinn{\dist} = \pi-{\dist}/2+3\sinn{{\dist}/2}$.
\end{description}
\end{proof}

Note that Lemmata~\ref{lem: Performance when x>=d},
\ref{lem: Performance when x<dist} imply that the performance of Algorithm \ref{def: nonwireless algo} is, in the worst case, no more than $1+\pi-{\dist}/2+3\sinn{{\dist}/2}$, concluding also Theorem~\ref{thm: non-wireless upper bound}.

\section{Lower Bounds}
\label{sec: lb f2f}

We conclude the study of treasure evacuation with 2 robots by providing the following lower bound pertaining to distributed systems under the face-to-face communication model.
\begin{theorem}   \label{thm:2lb}
For problem \ten, any algorithm needs at least time $1+\pi/3+4\sinn{\dist/2}$\ if\ $0\leq \dist\leq 2\pi/3$, or $1+\pi/3+
2\sinn{\dist}+2\sinn{\dist/2}$\ if\ $2\pi/3 \leq \dist\leq \pi$.
\end{theorem}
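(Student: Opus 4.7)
My plan is to leverage the adversary sketched in the theorem statement: at a carefully chosen moment $t^* := 1 + \pi/3$, the adversary commits by selecting a triple $(A,B,C)$ on the periphery with $\widehat{AB}=\widehat{BC}=\dist$ and at most one visited point, and then places $(T,E)$ at an unvisited pair at arc-distance $\dist$ in this triple.

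\textbf{Step 1 (existence of a valid triple at $t^*$).} Since each robot reaches the periphery no earlier than time $1$ and moves at speed $\leq 1$, the periphery-visited set $V_{t^*}$ has arc-length at most $2(t^*-1) = 2\pi/3$. Averaging over a uniform $B$ on the circle,
\[
\mathbb{E}_B \bigl[|\{B-\dist,\, B,\, B+\dist\} \cap V_{t^*}|\bigr] = \frac{3|V_{t^*}|}{2\pi} \leq 1,
\]
so some $B$ admits at most one visited point in its triple. A small refinement restricts attention to $B \notin V_{t^*}$ (the ``both sides visited'' set has measure at most $|V_{t^*}| \leq 2\pi/3$, while unvisited $B$'s have measure at least $4\pi/3$). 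Consequently the center $B$ itself is unvisited, and at least one of $\{A, B\}$ or $\{B, C\}$ is an unvisited pair at arc-distance $\dist$, i.e.\ a legal placement for $(T, E)$.

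\textbf{Step 2 (post-commitment lower bound).} Once $(T,E)$ are placed, any evacuation must move some robot to $T$ and then transport $T$ along $\overline{TE}=2\sinn{\dist/2}$, so the time spent after $t^*$ is at least $\min_{i\in\{1,2\}} |R_iT| + 2\sinn{\dist/2}$, where $R_1, R_2$ are the robot positions at $t^*$. The adversary exploits both its freedom to choose the triple among the continuum of valid ones and its freedom to swap $T\leftrightarrow E$ within the triple. For $\dist \leq 2\pi/3$, the two unvisited candidate chords satisfy $\overline{AB}=\overline{BC}=2\sinn{\dist/2} \leq \overline{AC} = 2\sinn{\dist}$; picking the labelling so that $T$ is the candidate farther from both robots forces $\min_i|R_iT| \geq 2\sinn{\dist/2}$, giving a post-commit cost at least $4\sinn{\dist/2}$. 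For $\dist > 2\pi/3$ the triple is ``compact'' ($A$ and $C$ lie on the same short arc $2\pi - 2\dist < 2\pi/3$ with chord $2\sinn{\dist}$), and the best separation the adversary can enforce is $\min_i|R_iT|\geq 2\sinn{\dist}$, yielding post-commit cost at least $2\sinn{\dist} + 2\sinn{\dist/2}$.

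Adding the $1$ units to reach the periphery, the $\pi/3$ units up to commitment, and the above post-commitment costs produces the two claimed bounds.

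\textbf{Main obstacle.} The hardest part is the geometric claim in Step 2: that no matter how the two robots have positioned themselves on the disk by $t^*$ (subject only to the $\pi/3$ exploration budget), the adversary can select a triple and a labelling achieving the claimed chord distance from the nearest robot to $T$. This requires parametrizing the one-parameter family of valid triples at $t^*$ and showing that the two ``exclusion zones'' of radius $\dist$ (resp.\ $2\pi - 2\dist$) around each robot cannot simultaneously eliminate every favorable triple. The case split at $\dist=2\pi/3$ reflects exactly the transition from spread-out triples (where $\overline{AC} > \overline{AB}$, so the farther candidate is $C$ at chord $2\sinn{\dist}$, but $R_i$-to-$T$ is still at least $2\sinn{\dist/2}$) to compact triples (where $\overline{AC}<\overline{AB}$ and only the smaller $2\sinn{\dist}$ separation remains enforceable).
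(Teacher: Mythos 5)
Your overall architecture (wait roughly $\pi/3$ after reaching the periphery until an ``at most one visited point'' triple is about to be destroyed, then charge a fetch-and-carry recovery cost) is the same adversary idea as the paper's, but your version commits at the \emph{fixed} time $t^*=1+\pi/3$ and then needs a separation claim about the robots' positions at that instant, and this is exactly where the proof breaks. The claim you flag as the ``main obstacle'' --- that for every pair of robot positions at $t^*$ the adversary can choose a triple and a labelling with $\min_i|R_iT|\geq 2\sin(\alpha/2)$ (resp.\ $\geq 2\sin\alpha$) --- is not only unproven, it is false as stated: the robots need not be on the periphery at time $t^*$. A robot that lingers at (or near) the centre is within chord distance $1$ of \emph{every} boundary point, so no choice of triple enforces separation $2\sin(\alpha/2)$ once $\alpha>\pi/3$, nor $2\sin\alpha$ for $\alpha$ slightly above $2\pi/3$. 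Such an algorithm is of course globally wasteful, but your accounting ($t^*$ plus post-commit cost measured from the positions at $t^*$) gives it no penalty for the wasted exploration, so the bound does not follow for those algorithms without an additional trade-off argument (e.g.\ charging an interior robot with a longer exploration phase), which the proposal does not contain. The averaging argument of Step~1 and the additive accounting are fine; the gap is entirely in Step~2.

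The paper avoids this difficulty by making the adversary reactive rather than time-triggered: it waits until the robots actually visit a \emph{second} point of a surviving triple $A,B,C$, so at the moment of commitment the two robots are located exactly at two points of the triple (the second one being visited precisely then, which keeps the placement consistent with the robots' observations). All distances in the recovery phase are then exact chords of the triple, and a short case analysis --- robots at $A,C$ versus $A,B$, with a minimum over which robot performs the fetch and a maximum over the adversary's placements of $T,E$ --- yields $4\sin(\alpha/2)$ for $\alpha\leq 2\pi/3$ and $2\sin\alpha+2\sin(\alpha/2)$ for $\alpha\geq 2\pi/3$. If you want to salvage your fixed-time formulation, you would have to either restrict attention to robots on the periphery (and justify that restriction against interior detours) or fold the exploration deficit of an interior robot back into the $\pi/3$ term; following the paper's ``commit when the second triple point is visited'' device is the cleaner fix.
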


For the proof, we invoke an adversary (not necessarily the
most potent one),
who waits for as long as there are
three points $A,B,C$ with $AB=BC=\dist$ on the periphery such that at most one of them has been visited by a robot. Then depending on the moves of the robots decides where to place the PoI. 


\begin{proof}[Proof of Theorem~\ref{thm:2lb}]
Since the robots start from the center, they'll need time $1$ to reach the periphery. The adversary (not necessarily the
most potent one, but with this weaker adversary we still get a (weaker) lower bound) will wait for as long as there are
three points $A,B,C$ with $AB=BC=\dist$ on the periphery such that at most one of them has been visited by a robot.
Observe that this will be true for as long as less than $2\pi/3$ of the periphery has been explored; this will be 
done by the 2 robots after time at least $(2\pi/3)/2=\pi/3$. Hence, after time at least $1+\pi/3$ there are such points $A,B,C$
with only one of these points visited by a robot. For convenience, we assume that robot 1 is the first to visit a point
at time $t$ and robot 2 visits a different point next at time $t+\varepsilon$ (if this doesn't happen, then the optimal algorithm
would be behaving like the case of only one robot, which is clearly suboptimal for the adversary moves below). It will
be apparent below that the lower bound becomes weaker for $\varepsilon=0$, so that's what we will assume from now on.
We distinguish the following cases:

\noindent {\bf Case 1 (Robot 1 at $A$, Robot 2 at $C$):} If the adversary places $T\rightarrow B, E\rightarrow A \text{ or } C$, then recovery needs
at least time $4\sinn{\dist/2}$ (if robot 1 or 2 respectively evacuates $T$ by itself). If it places $T\rightarrow A, E\rightarrow C$, 
then recovery needs at least time $2\sinn{\dist}$ (a robot evacuates $T$ by traversing $AC$). Any other placement of $T,E$
by the adversary gives either the same or a worse (lower) bound, and, therefore, it's discarded. It is clear that, in this case, 
the adversary goes with the first option, for a lower bound of $4\sinn{\dist/2}$.  

\noindent{\bf Case 2 (Robot 1 at $A$, Robot 2 at $B$):} If the adversary places $T\rightarrow C, E\rightarrow A$, then recovery needs
at least time $\min\{2\sinn{\dist/2}+2\sinn{\dist}, 4\sinn{\dist}\}$ (if robot 2 or 1 respectively evacuates $T$ by itself). 
If it places $T\rightarrow C, E\rightarrow B$, then recovery needs at least time $\min\{4\sinn{\dist/2}, 2\sinn{\dist}+2\sinn{\dist/2}\}$ 
(if robot 2 or 1 respectively evacuates $T$ by itself). Any other placement of $T,E$
by the adversary gives either the same or a worse (lower) bound, and, therefore, it's discarded. It is clear that, in this case, 
the adversary goes with the option that maximizes the lower bound, for a lower bound of 
$$ \max\left\{
\begin{tabular}{ll}
$\min\{2\sinn{\dist/2}+2\sinn{\dist}, 4\sinn{\dist}\},$ \\
$\min\{4\sinn{\dist/2}, 2\sinn{\dist}+2\sinn{\dist/2}\}$
\end{tabular}
\right\}.
$$ 
By taking the minimum of Cases 1,2 above, the lower bound of the theorem follows.
\end{proof}

\section{Conclusion}
\label{sec:conclusion}

In this paper we introduced a new problem on {\em searching and fetching} which we called \emph{treasure-evacuation} from a unit disk. We studied two online variants of treasure-evacuation with two robots, based on different communication models. 
The main point of our approach was to propose distributed
algorithms 
by a collaborative team of robots.
Our main results demonstrate how robot communication capabilities affect the treasure evacuation time by contrasting face-to-face (information can be shared only if robots meet) and wireless (information is shared at any time) communication. 

There are several open problems in addition to sharpening our bounds, and in particular to giving lower bounds that would effectively separate the two communication models. 
Other variations of our problem include the consideration of different
1) number of robots, 
2) geometric domains (discrete or continuous), 
3) robot starting positions, 
4) number of hidden objects,
5) communication models, 
6) robots' speeds, 
7) a priori knowledge of the topology or partial information about the targets (\textit{e.g.} a bound on the distance of the hidden items or no information at all), etc. 
In each case, the challenging task is to establish either tight bounds, or to separate closely related problems, \textit{e.g.} the problem in which either the exact distance vs a bound on the distance of the hidden items is known. 

When it comes to searching with multiple robots, our 2-robot algorithms can be easily extended to the $n$-robot case. Assuming that $n$ is even (otherwise we ignore one robot) we split the robots into pairs, defining points in intervals of length $4\pi/n$ on the cycle, assigning each pair of robots to each such poin. Then, we let them run the corresponding 2-robot algorithm. Would that strategy be improvable? 
We anticipate that nearly optimal algorithms for small number of robots, \textit{e.g.} for $n=3,4$, or any other variation of problem we consider will require new and significantly different algorithmic ideas than those we propose here, still in the same spirit. 

\acknowledgements
\label{sec:ack}
The authors would like to thank the anonymous reviewers for proposing a number of suggestions that resulted in various improvements. 

\nocite{*}
\bibliographystyle{abbrvnat}
\bibliography{refs}

\begin{thebibliography}{52}
\providecommand{\natexlab}[1]{#1}
\providecommand{\url}[1]{\texttt{#1}}
\expandafter\ifx\csname urlstyle\endcsname\relax
  \providecommand{\doi}[1]{doi: #1}\else
  \providecommand{\doi}{doi: \begingroup \urlstyle{rm}\Url}\fi

\bibitem[Ahlswede and Wegener(1987)]{ahlswede1987search}
R.~Ahlswede and I.~Wegener.
\newblock \emph{Search problems}.
\newblock Wiley-Interscience, 1987.

\bibitem[Albers and Henzinger(2000)]{AH00}
S.~Albers and M.~R. Henzinger.
\newblock Exploring unknown environments.
\newblock \emph{SIAM Journal on Computing}, 29\penalty0 (4):\penalty0
  1164--1188, 2000.

\bibitem[Albers et~al.(2002)Albers, Kursawe, and Schuierer]{AKS02}
S.~Albers, K.~Kursawe, and S.~Schuierer.
\newblock Exploring unknown environments with obstacles.
\newblock \emph{Algorithmica}, 32\penalty0 (1):\penalty0 123--143, 2002.

\bibitem[Alpern(2011)]{Alpern11}
S.~Alpern.
\newblock Find-and-fetch search on a tree.
\newblock \emph{Operations Research}, 59\penalty0 (5):\penalty0 1258--1268,
  2011.

\bibitem[Alpern and Beck(1999)]{alpern1999asymmetric}
S.~Alpern and A.~Beck.
\newblock Asymmetric rendezvous on the line is a double linear search problem.
\newblock \emph{Mathematics of Operations Research}, 24\penalty0 (3):\penalty0
  604--618, 1999.

\bibitem[Alpern and Gal(2003)]{alpern2003theory}
S.~Alpern and S.~Gal.
\newblock \emph{The theory of search games and rendezvous}.
\newblock Springer, 2003.

\bibitem[Baeza-Yates and Schott(1995)]{BS95}
R.~Baeza-Yates and R.~Schott.
\newblock Parallel searching in the plane.
\newblock \emph{Computational Geometry}, 5\penalty0 (3):\penalty0 143--154,
  1995.

\bibitem[Baeza~Yates et~al.(1993)Baeza~Yates, Culberson, and
  Rawlins]{baezayates1993searching}
R.~Baeza~Yates, J.~Culberson, and G.~Rawlins.
\newblock Searching in the plane.
\newblock \emph{Information and Computation}, 106\penalty0 (2):\penalty0
  234--252, 1993.

\bibitem[Bampas et~al.(2016)Bampas, Czyzowicz, Gasieniec, Ilcinkas, Klasing,
  Kociumaka, and Pajak]{BampasCGIKKP16}
E.~Bampas, J.~Czyzowicz, L.~Gasieniec, D.~Ilcinkas, R.~Klasing, T.~Kociumaka,
  and D.~Pajak.
\newblock Linear search by a pair of distinct-speed robots.
\newblock In J.~Suomela, editor, \emph{Structural Information and Communication
  Complexity - 23rd International Colloquium, SIROCCO 2016, Helsinki, Finland,
  July 19-21, 2016, Revised Selected Papers}, volume 9988 of \emph{Lecture
  Notes in Computer Science}, pages 195--211, 2016.
\newblock ISBN 978-3-319-48313-9.

\bibitem[Beck(1964)]{beck1964linear}
A.~Beck.
\newblock On the linear search problem.
\newblock \emph{Israel Journal of Mathematics}, 2\penalty0 (4):\penalty0
  221--228, 1964.

\bibitem[Beck(1965)]{beck1965more}
A.~Beck.
\newblock More on the linear search problem.
\newblock \emph{Israel Journal of Mathematics}, 3\penalty0 (2):\penalty0
  61--70, 1965.

\bibitem[Beck and Beck(1984)]{beck1984linear}
A.~Beck and M.~Beck.
\newblock Son of the linear search problem.
\newblock \emph{Israel Journal of Mathematics}, 48\penalty0 (2-3):\penalty0
  109--122, 1984.

\bibitem[Beck and Newman(1970)]{beck1970yet}
A.~Beck and D.~Newman.
\newblock Yet more on the linear search problem.
\newblock \emph{Israel Journal of Mathematics}, 8\penalty0 (4):\penalty0
  419--429, 1970.

\bibitem[Beck and Warren(1973)]{beck1973return}
A.~Beck and P.~Warren.
\newblock The return of the linear search problem.
\newblock \emph{Israel Journal of Mathematics}, 14\penalty0 (2):\penalty0
  169--183, 1973.

\bibitem[Bellman(1956)]{bellman1956routing}
R.~Bellman.
\newblock On a routing problem.
\newblock Technical report, DTIC Document, 1956.

\bibitem[Bellman(1963)]{bellman1963optimal}
R.~Bellman.
\newblock An optimal search.
\newblock \emph{Siam Review}, 5\penalty0 (3):\penalty0 274--274, 1963.

\bibitem[Benkoski et~al.(1991)Benkoski, Monticino, and
  Weisinger]{benkoski1991survey}
S.~Benkoski, M.~Monticino, and J.~Weisinger.
\newblock A survey of the search theory literature.
\newblock \emph{Naval Research Logistics (NRL)}, 38\penalty0 (4):\penalty0
  469--494, 1991.

\bibitem[Berman(1998)]{berman1998line}
P.~Berman.
\newblock On-line searching and navigation.
\newblock In A.~Fiat and G.~J. Woeginger, editors, \emph{Online Algorithms: The
  State of the Art}, pages 232--241. Springer, 1998.

\bibitem[Bonato and Nowakowski(2011)]{anthony2011game}
A.~Bonato and R.~Nowakowski.
\newblock \emph{The game of cops and robbers on graphs}.
\newblock AMS, 2011.

\bibitem[Brandt et~al.(2017)Brandt, Laufenberg, Lv, Stolz, and
  Wattenhofer]{BLLSW17}
S.~Brandt, F.~Laufenberg, Y.~Lv, D.~Stolz, and R.~Wattenhofer.
\newblock Collaboration without communication: Evacuating two robots from a
  disk.
\newblock In D.~Fotakis, A.~Pagourtzis, and V.~T. Paschos, editors,
  \emph{Algorithms and Complexity: 10th International Conference, CIAC 2017,
  Athens, Greece, May 24-26, 2017, Proceedings}, pages 104--115, Cham, 2017.
  Springer International Publishing.

\bibitem[Burgard et~al.(2005)Burgard, Moors, Stachniss, and Schneider]{B05}
W.~Burgard, M.~Moors, C.~Stachniss, and F.~E. Schneider.
\newblock Coordinated multi-robot exploration.
\newblock \emph{Robotics, IEEE Transactions on}, 21\penalty0 (3):\penalty0
  376--386, 2005.

\bibitem[Chrobak et~al.(2015)Chrobak, Gasieniec, Gorry, and
  Martin]{Groupsearch}
M.~Chrobak, L.~Gasieniec, T.~Gorry, and R.~Martin.
\newblock Group search on the line.
\newblock In G.~F. Italiano, T.~Margaria-Steffen, J.~Pokorn{\'y}, J.-J.
  Quisquater, and R.~Wattenhofer, editors, \emph{SOFSEM}, volume 8939 of
  \emph{Lecture Notes in Computer Science}, pages 164--176. Springer, 2015.
\newblock ISBN 978-3-662-46077-1.

\bibitem[Chuangpishit et~al.(2018)Chuangpishit, Georgiou, and Sharma]{CGS18}
H.~Chuangpishit, K.~Georgiou, and P.~Sharma.
\newblock Average case - worst case tradeoffs for evacuating 2 robots from the
  disk in the face-to-face model.
\newblock In \emph{14th International Symposium on Algorithms and Experiments
  for Wireless Sensor Networks, {ALGOSENSORS'18}}, volume 11410 of
  \emph{Lecture Notes in Computer Science}, pages 62--82. Springer, 2018.

\bibitem[Chung et~al.(2011)Chung, Hollinger, and Isler]{chung2011search}
T.~H. Chung, G.~A. Hollinger, and V.~Isler.
\newblock Search and pursuit-evasion in mobile robotics.
\newblock \emph{Autonomous robots}, 31\penalty0 (4):\penalty0 299--316, 2011.

\bibitem[Czyzowicz et~al.(2014)Czyzowicz, Gasieniec, Gorry, Kranakis, Martin,
  and Pajak]{CGGKMP}
J.~Czyzowicz, L.~Gasieniec, T.~Gorry, E.~Kranakis, R.~Martin, and D.~Pajak.
\newblock Evacuating robots from an unknown exit located on the perimeter of a
  disc.
\newblock In \emph{DISC 2014}, pages 122--136. Springer, Austin, Texas, 2014.

\bibitem[Czyzowicz et~al.(2015{\natexlab{a}})Czyzowicz, Georgiou, Kranakis,
  Narayanan, Opatrny, and Vogtenhuber]{CGKNOV}
J.~Czyzowicz, K.~Georgiou, E.~Kranakis, L.~Narayanan, J.~Opatrny, and
  B.~Vogtenhuber.
\newblock Evacuating robots from a disc using face to face communication.
\newblock In \emph{CIAC 2015}, pages 140--152. Springer, Paris, France,
  2015{\natexlab{a}}.

\bibitem[Czyzowicz et~al.(2015{\natexlab{b}})Czyzowicz, Kranakis, Krizanc,
  Narayanan, Opatrny, and Shende]{CKKNOS}
J.~Czyzowicz, E.~Kranakis, D.~Krizanc, L.~Narayanan, J.~Opatrny, and S.~Shende.
\newblock Wireless autonomous robot evacuation from equilateral triangles and
  squares.
\newblock In \emph{{ADHOC-NOW} 2015, Athens, Greece, June 29 - July 1, 2015,
  Proceedings}, pages 181--194, 2015{\natexlab{b}}.

\bibitem[Czyzowicz et~al.(2018{\natexlab{a}})Czyzowicz, Dobrev, Georgiou,
  Kranakis, and MacQuarrie]{CGDKM1618}
J.~Czyzowicz, S.~Dobrev, K.~Georgiou, E.~Kranakis, and F.~MacQuarrie.
\newblock Evacuating two robots from multiple unknown exits in a circle.
\newblock \emph{Theoretical Computer Science}, 709:\penalty0 20--30,
  2018{\natexlab{a}}.

\bibitem[Czyzowicz et~al.(2018{\natexlab{b}})Czyzowicz, Georgiou, Killick,
  Kranakis, Krizanc, Narayanan, Opatrny, and Shende]{asymtotic18}
J.~Czyzowicz, K.~Georgiou, R.~Killick, E.~Kranakis, D.~Krizanc, L.~Narayanan,
  J.~Opatrny, and S.~Shende.
\newblock Priority evacuation from a disk using mobile robots.
\newblock In \emph{International Colloquium on Structural Information and
  Communication Complexity}, pages 392--407. Springer, 2018{\natexlab{b}}.

\bibitem[Czyzowicz et~al.(2018{\natexlab{c}})Czyzowicz, Georgiou, Killick,
  Kranakis, Krizanc, Narayanan, Opatrny, and Shende]{Priority-3-servants-18}
J.~Czyzowicz, K.~Georgiou, R.~Killick, E.~Kranakis, D.~Krizanc, L.~Narayanan,
  J.~Opatrny, and S.~M. Shende.
\newblock God save the queen.
\newblock In H.~Ito, S.~Leonardi, L.~Pagli, and G.~Prencipe, editors, \emph{9th
  International Conference on Fun with Algorithms, FUN 2018, June 13-15, 2018,
  La Maddalena, Italy}, volume 100 of \emph{LIPIcs}, pages 16:1--16:20. Schloss
  Dagstuhl - Leibniz-Zentrum fuer Informatik, 2018{\natexlab{c}}.
\newblock ISBN 978-3-95977-067-5.
\newblock URL \url{http://www.dagstuhl.de/dagpub/978-3-95977-067-5}.

\bibitem[Demaine et~al.(2006)Demaine, Fekete, and Gal]{demaine2006online}
E.~D. Demaine, S.~P. Fekete, and S.~Gal.
\newblock Online searching with turn cost.
\newblock \emph{Theoretical Computer Science}, 361\penalty0 (2):\penalty0
  342--355, 2006.

\bibitem[Deng et~al.(1991)Deng, Kameda, and Papadimitriou]{DKP91}
X.~Deng, T.~Kameda, and C.~Papadimitriou.
\newblock How to learn an unknown environment.
\newblock In \emph{FOCS}, pages 298--303. IEEE, 1991.

\bibitem[Dobbie(1968)]{dobbie1968survey}
J.~Dobbie.
\newblock A survey of search theory.
\newblock \emph{Operations Research}, 16\penalty0 (3):\penalty0 525--537, 1968.

\bibitem[Fekete et~al.(2010)Fekete, Gray, and Kr{\"o}ller]{FGK10}
S.~Fekete, C.~Gray, and A.~Kr{\"o}ller.
\newblock Evacuation of rectilinear polygons.
\newblock In \emph{Combinatorial Optimization and Applications}, pages 21--30.
  Springer, 2010.

\bibitem[Fomin and Thilikos(2008)]{FT08}
F.~V. Fomin and D.~M. Thilikos.
\newblock An annotated bibliography on guaranteed graph searching.
\newblock \emph{Theoretical Computer Science}, 399\penalty0 (3):\penalty0
  236--245, 2008.

\bibitem[Georgiou et~al.(2016)Georgiou, Karakostas, and Kranakis]{GKKa16}
K.~Georgiou, G.~Karakostas, and E.~Kranakis.
\newblock Search-and-fetch with one robot on a disk.
\newblock In \emph{International Symposium on Algorithms and Experiments for
  Sensor Systems, Wireless Networks and Distributed Robotics}, pages 80--94.
  Springer, 2016.

\bibitem[Gluss(1961)]{G61}
B.~Gluss.
\newblock An alternative solution to the lost at sea problem.
\newblock \emph{Naval Research Logistics Quarterly}, 8\penalty0 (1):\penalty0
  117--122, 1961.

\bibitem[Hipke et~al.(1999)Hipke, Icking, Klein, and Langetepe]{HIKL99}
C.~Hipke, C.~Icking, R.~Klein, and E.~Langetepe.
\newblock How to find a point on a line within a fixed distance.
\newblock \emph{Discrete Appl.~Math.}, 93\penalty0 (1):\penalty0 67--73, 1999.

\bibitem[Hoffmann et~al.(2001)Hoffmann, Icking, Klein, and Kriegel]{HIKK01}
F.~Hoffmann, C.~Icking, R.~Klein, and K.~Kriegel.
\newblock The polygon exploration problem.
\newblock \emph{SIAM Journal on Computing}, 31\penalty0 (2):\penalty0 577--600,
  2001.

\bibitem[Hoppe and Tardos(1994)]{hoppe1994polynomial}
B.~Hoppe and {\'E}.~Tardos.
\newblock Polynomial time algorithms for some evacuation problems.
\newblock In \emph{SODA}, volume~94, pages 433--441, 1994.

\bibitem[Isbell(1967)]{I67}
J.~R. Isbell.
\newblock Pursuit around a hole.
\newblock \emph{Naval Research Logistics Quarterly}, 14\penalty0 (4):\penalty0
  569--571, 1967.

\bibitem[Jennings et~al.(1997)Jennings, Whelan, and Evans]{JWE97}
J.~S. Jennings, G.~Whelan, and W.~F. Evans.
\newblock Cooperative search and rescue with a team of mobile robots.
\newblock In \emph{ICAR}, pages 193--200. IEEE, 1997.

\bibitem[Kao et~al.(1996)Kao, Reif, and Tate]{kao1996searching}
M.-Y. Kao, J.~H. Reif, and S.~R. Tate.
\newblock Searching in an unknown environment: An optimal randomized algorithm
  for the cow-path problem.
\newblock \emph{Information and Computation}, 131\penalty0 (1):\penalty0
  63--79, 1996.

\bibitem[Kao et~al.(1998)Kao, Ma, Sipser, and Yin]{kao1998optimal}
M.-Y. Kao, Y.~Ma, M.~Sipser, and Y.~Yin.
\newblock Optimal constructions of hybrid algorithms.
\newblock \emph{J. Algorithms}, 29\penalty0 (1):\penalty0 142--164, 1998.

\bibitem[Kleinberg(1994)]{K94}
J.~Kleinberg.
\newblock On-line search in a simple polygon.
\newblock In \emph{SODA}, page~8. SIAM, 1994.

\bibitem[Koutsoupias et~al.(1996)Koutsoupias, Papadimitriou, and
  Yannakakis]{koutsoupias1996searching}
E.~Koutsoupias, C.~Papadimitriou, and M.~Yannakakis.
\newblock Searching a fixed graph.
\newblock In \emph{ICALP 96}, pages 280--289. Springer, 1996.

\bibitem[Lamprou et~al.(2016)Lamprou, Martin, and Schewe]{LMS16}
I.~Lamprou, R.~Martin, and S.~Schewe.
\newblock Fast two-robot disk evacuation with wireless communication.
\newblock In C.~Gavoille and D.~Ilcinkas, editors, \emph{Distributed Computing:
  30th International Symposium, DISC 2016, Paris, France, September 27-29,
  2016. Proceedings}, pages 1--15. Springer Berlin Heidelberg, 2016.

\bibitem[Nahin(2012)]{nahin2012chases}
P.~Nahin.
\newblock \emph{Chases and Escapes: The Mathematics of Pursuit and Evasion}.
\newblock Princeton University Press, 2012.

\bibitem[Papadimitriou and Yannakakis(1989)]{PY}
C.~H. Papadimitriou and M.~Yannakakis.
\newblock Shortest paths without a map.
\newblock In \emph{ICALP}, pages 610--620. Springer, 1989.

\bibitem[Stone(1975)]{stone1975theory}
L.~Stone.
\newblock \emph{Theory of optimal search}.
\newblock Academic Press New York, 1975.

\bibitem[Thrun(2001)]{T01}
S.~Thrun.
\newblock A probabilistic on-line mapping algorithm for teams of mobile robots.
\newblock \emph{The International Journal of Robotics Research}, 20\penalty0
  (5):\penalty0 335--363, 2001.

\bibitem[Yamauchi(1998)]{Y98}
B.~Yamauchi.
\newblock Frontier-based exploration using multiple robots.
\newblock In \emph{Proceedings of the second international conference on
  Autonomous agents}, pages 47--53. ACM, 1998.

\end{thebibliography}
\label{sec:biblio}

\appendix

\section{Trigonometric Inequalities}
\begin{lemma}\label{lem:ti}
\begin{enumerate}[a)]
\item \label{6} 
There exists some $\dist_0 \in (0,\pi)$ such that $3{\dist}/2-\pi-\sinn{{\dist}/2}+2 \sinn{\dist}$ is positive for all $\dist\in (\dist_0, \pi)$, and negative for all $\dist \in [0,\dist_0)$. In particular,  $\dist_0\approx 1.22353$. 

\item \label{7} 
$\min \left\{ x+2\sinn{{\dist}/2}, 2\pi - {\dist} - x \right\} + 2\sinn{{\dist}/2} 
\leq \boundnw, ~\forall \dist\in[0,\pi]$. 

\item \label{8} 
$\max\{\alpha, \barr x + 2\sinn{{\dist}/2}\} +2  \leq \boundnw$ for all $\dist\in [0, 2\pi/3]$.

\item \label{9} 
${\dist}+\sinn{\dist} \leq \pi$ for all $\dist\in [0,\pi]$.

\ignore{
\item \label{10} 
${\dist}-\sinn{{\dist}/2}+2\sinn{\dist} \leq \pi$ for all $\dist\in [0,\pi]$.
}


\item \label{3} 
${\dist}/2+2\sinn{\dist}\leq \pi-{\dist}+2\sinn{{\dist}/2}, ~\forall \dist\in [0,2\pi/3]$.  .

\item \label{5} 
$\max_{0\leq x\leq \pi-\dist}\{
\sinn{\pi/2-{\dist}/2-x}
\} \leq \sinn{{\dist}/2}, ~\forall \dist\in [2\pi/3, \pi]$

\item \label{1} 
$\max_{0\leq x\leq {\dist}/2}\{x+2\sinn{{\dist}/2-x}\} +2\sinn{\dist} \leq \boundw, ~\forall \dist\in [0,2\pi/3]$. 

\item \label{2} 
$\max_{0\leq x\leq \pi-\dist}\{x+2\sinn{\pi/2-{\dist}/2-x}\}
\leq \pi-{\dist}+2\sinn{{\dist}/2}, ~\forall \dist\in [2\pi/3, \pi]$
\ignore{
\begin{align*}
&\max_{0\leq x\leq \pi-\dist}\{x+2\sinn{\pi/2-{\dist}/2-x}\} \\
\leq & \pi-{\dist}+2\sinn{{\dist}/2}, ~\forall \dist\in [2\pi/3, \pi]
\end{align*}
}
\item \label{4} $\sinn{\dist}\leq \sinn{{\dist}/2},~\forall \dist\in [0,2\pi/3]$, and \\ $\sinn{\dist}\geq \sinn{{\dist}/2},~\forall \dist\in [2\pi/3,\pi]$.
\end{enumerate}
\end{lemma}

\begin{description}
\item[Proof of~\ref{lem:ti}\ref{6}] We observe that 
\begin{align*}
\frac{\partial}{\partial \alpha}\barr x(\dist)
=& \frac{\partial}{\partial \alpha}\left( 3{\dist}/2-\pi-\sinn{{\dist}/2}+2 \sinn{\dist}\right)\\
= & 3/2-\coss{\dist}+\coss{{\dist}/2}.
\end{align*}
Observe that the above quantity remains positive for ${\dist}<2\pi/3$, while it is negative for ${\dist}>2\pi/3$. Since $\barr x (0)<0$ and $\barr x(2\pi/3)>0$, it follows that there is a unique root $\dist_0\in (0,2\pi/3)$ (which numerically can be estimated to $\dist_0\approx 1.22353$). Finally, we see that $\barr x(\pi) = \pi-1>0$, so $\barr x (\dist)$ remains positive for $\dist\in [2\pi/3, \pi]$. 

\item[Proof of~\ref{lem:ti}\ref{7}] We observe that 
$\min \left\{ x+2\sinn{{\dist}/2}, 2\pi - {\dist} - x \right\}$ attains its maximum when $x+2\sinn{{\dist}/2}= 2\pi - {\dist} - x$, in which case its value becomes $\boundnw$.

\item[Proof of~\ref{lem:ti}\ref{8}] 
First we claim that $3{\dist}/2-2\sinn{{\dist}/2} \leq \pi-2$ for ${\dist}\leq 2\pi/3$. This is because $\frac{\partial}{\partial \alpha}\left( 3{\dist}/2+2\sinn{{\dist}/2}\right) = 3/2+\coss{{\dist}/2}>0$, hence $3{\dist}/2-2\sinn{{\dist}/2} \leq \pi - \sqrt{3}/2 \leq \pi-2$. This claim immediately shows that 
${\dist}+2  \leq \boundnw$ for all $\dist\in [0, 2\pi/3]$.

Now we show that $\barr x + 2\sinn{{\dist}/2}+2 \leq \boundnw$ for all $\dist\in [0, 2\pi/3]$. For this it suffices to check that ${\dist}+\sinn{\dist}-\sinn{{\dist}/2} \leq \pi-1$. To that end we see that 
$\frac{\partial}{\partial \alpha}\left( {\dist}+\sinn{\dist}-\sinn{{\dist}/2} \right) = 1 + \coss{\dist}-\coss{{\dist}/2}/2 \geq 0$ for all ${\dist}\leq 2\pi/3$. Hence 
${\dist}+\sinn{\dist}-\sinn{{\dist}/2} \leq 2\pi/3
\leq 2\pi/3 + \sqrt{3}/2-\sqrt{3}/2 \leq \pi-1$ as wanted.

\item[Proof of~\ref{lem:ti}\ref{9}] We see that 
$\frac{\partial}{\partial \alpha}\left( {\dist}+\sinn{\dist} \right) = 1+\coss{\dist}\geq 0$, for all $\dist\in [0,\pi]$. hence, ${\dist}+\sinn{\dist} \leq \pi+\sinn{\pi}=\pi$.

\ignore{
\item[Proof of~\ref{lem:ti}\ref{10}] See Figure~\ref{fig: trigineq-noproof}.
\begin{figure}[!ht]
                \centering
                \includegraphics[scale=0.5]{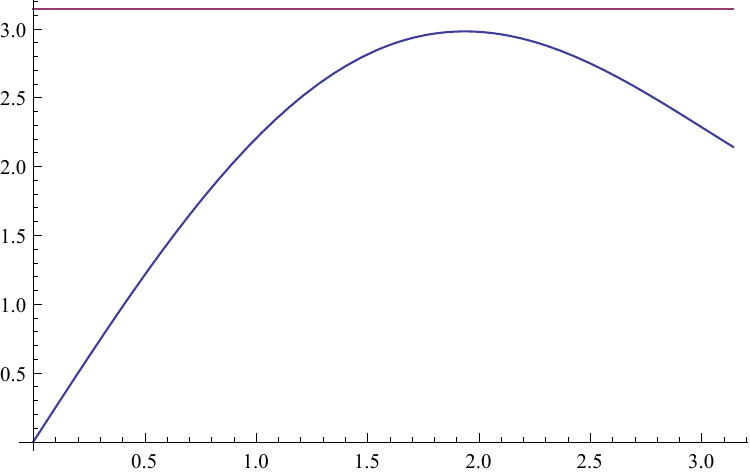}
                \caption{The function ${\dist}-\sinn{{\dist}/2}+2\sinn{\dist}$ compared to $\pi$ for all $\dist\in [0,\pi]$.}
                \label{fig: trigineq-noproof}
\end{figure}
}


\item[Proof of~\ref{lem:ti}\ref{3}] We observe that 
\begin{align*}
& \frac{\partial}{\partial \alpha}\left( 3{\dist}/2+2\sinn{\dist}-2\sinn{{\dist}/2}\right) \\
= &
3/2+2\coss{\dist}-\coss{{\dist}/2}.
\end{align*}
From the monotonicity of cosine in $[0,2\pi/3]$, we see that the above derivative preserves non negative sign when $\dist\in [0,2\pi/3]$. Hence, the maximum of 
$$3{\dist}/2+2\sinn{\dist}-2\sinn{{\dist}/2} \leq \pi$$
is attained when $\dist=2\pi/3$, and its value is $\pi$ as wanted. 

\item[Proof of~\ref{lem:ti}\ref{5}] We have that
\begin{align*}
& \max_{0\leq x\leq \pi-\dist}\{\sinn{\pi/2-{\dist}/2-x}\} \\
= & \max_{0\leq x\leq \pi-\dist}\{\coss{{\dist}/2+x}\} \\
\leq & \coss{{\dist}/2},
\end{align*}
since cosine is monotonically decreasing in $[0,\pi]$. But also for all $\dist\in [2\pi/3, \pi]$ we have that $\coss{{\dist}/2}\leq \sinn{{\dist}/2}$, concluding what we need.

\item[Proof of~\ref{lem:ti}\ref{1}] We have that 
\begin{align*}
& \max_{0\leq x\leq {\dist}/2}\{x+2\sinn{{\dist}/2-x} +2\sinn{\dist} \} \\
\leq & {\dist}/2+\sinn{{\dist}/2} +\sinn{\dist}
\end{align*}
where the first inequality is true due to the monotonicity of $x,\sinn{{\dist}/2-x}$ w.r.t. $x\leq {\dist}/2$ and for all $\dist\in [0,2\pi/3]$, and the last inequality since again ${\dist}\leq 2\pi/3$. The claim now follows from Lemma~\ref{lem:ti}\ref{3}.

\item[Proof of~\ref{lem:ti}\ref{2}] 
Follows immediately since $x\leq \pi-\dist$, and by Lemma~\ref{lem:ti}\ref{5}.

\item[Proof of~\ref{lem:ti}\ref{4}] Observe that $\sinn{{\dist}/2}-\sinn{\dist}$ is convex in $\dist\in [0,2\pi/3]$, so it attains its maximum either at $\dist=0$ or at $\dist =2\pi/3$. In both cases, its value is 0. Also, $\sinn{{\dist}/2}-\sinn{\dist}$ is monotonically increasing for ${\dist}>2\pi/3$, implying what was promised. 
\end{description}

\end{document}